\tikzset{%
  vertex/.style={circle, fill=black, inner sep=2pt},
  hypervertex/.style={circle, fill=black, inner sep=2pt},
  hyperedge/.style={rectangle, fill=black, inner sep=2pt},
  hyperedgeedge/.style={thick},
  smallvertex/.style={circle, fill=black, inner sep=1.3pt},
  edge/.style={thick, -{Latex[length=1.5mm, width=1.5mm]}},
  shortcut/.style={dashed, thick, -{Latex[length=1.5mm, width=1.5mm]}},
  undirectedEdge/.style={thick},
  layer/.style={ellipse,
                draw,
                minimum width=4cm,
                minimum height=1.5cm},
  decoedge/.style={-{Latex[length=1mm, width=1mm]}},
  decoshortcut/.style={dashed, -{Latex[length=1mm, width=1mm]}},
  dist/.style={2cm},
  mapping/.style={shorten >=2pt, shorten <=2pt, -latex, thick, bend left=15},
  iso/.style={shorten >=2pt, shorten <=2pt, -latex, very thick,},
  instruction/.style={-latex, ultra thick, dashed, gray},
  mappingHP/.style={densely dotted, semithick, black!60!white, shorten >=2pt, shorten <=2pt, ->,bend left=15}
}
\newcommand{\Powerset}[1]{2^{#1}}
\renewcommand{\emptyset}{\varnothing}
\newcommand{\restr}[2]{#1|_{#2}}
\newcommand{\Neighbors}[2][]{N_{#1}(#2)}
\newcommand{\Direction}[2]{``#1~$\! \Rightarrow \!$~#2''}
\renewcommand{\phi}{\varphi}
\newcommand{\N}{\mathbb{N}}
\newcommand{\R}{\mathbb{R}}
\newcommand{\cupdot}{\;\dot{\cup}\;}
\DeclarePairedDelimiter{\abs}{\lvert}{\rvert}
\newcommand{\DAGs}{\mathcal{A}}
\newcommand{\Trees}{\mathcal{T}}
\newcommand{\ColoredTrees}{\mathcal{C}\hspace{-0.5pt}\mathcal{T}}
\newcommand{\Graphs}{\mathcal{G}}
\newcommand{\SomeClass}{\mathcal{F}}
\newcommand{\iso}{\cong}
\newcommand{\Hom}{\mathsf{Hom}}
\newcommand{\HOMVector}{\mathsf{HOM}}
\newcommand{\HOM}[2]{\mathsf{HOM}_{#1}(#2)}
\newcommand{\Fam}[2]{\left(#1\right)_{#2}}
\newcommand{\Set}[1]{\{\, #1 \,\}}
\newcommand{\MultiSet}[1]{\{\!\!\{\, #1 \,\}\!\!\}}
\DeclareFontFamily{U}{mathx}{\hyphenchar\font45}
\DeclareFontShape{U}{mathx}{m}{n}{ <-> mathx10 }{}
\DeclareSymbolFont{mathx}{U}{mathx}{m}{n}
\DeclareMathAccent{\widebar}{\mathalpha}{mathx}{"73}
\newcommand{\cwidebar}[2][0]{{\mathpalette\@cwidebar{{#1}{#2}}}}
\newcommand{\@cwidebar}[2]{\@cwideb@r{#1}#2}
\newcommand{\@cwideb@r}[3]{%
  \sbox\z@{$\m@th#1\mkern-#2mu#3\mkern#2mu$}%
  \widebar{\box\z@}%
}
\newcommand{\TensorProduct}{\otimes}
\newcommand{\TournamentOf}[1]{\overrightarrow{K_{#1}}}
\newcommand{\NumbersTo}[1]{\left[#1\right]}
\newcommand{\LeafAddingI}{\mathsf{LeafAddInHom}}
\newcommand{\LeavesOf}[2][]{L_{#1}(#2)}
\newcommand{\KnOf}[1]{K_{#1}}
\newcommand{\QuotientGraph}[2]{#1/\!{#2}}
\newcommand{\ImageOf}[1]{im(#1)}
\def\EdgeDistance{1.5}%
\def\VertexDistance{1.3}%
\def\NotSoLongBend{40}%
\def\TadLongerShorterBend{28}%
\newcommand{\MergeEquivSpaceOf}[1]{\equiv_{#1}}
\newcommand{\MergeEquivOf}[1]{\equiv_{#1}\!}
\newcommand{\MergeEquivClassesOf}[2]{#1/\!{\MergeEquivOf{#2}}}
\newcommand{\EqClassOf}[1]{[#1]}
\newcommand{\EdgeFunOf}[1]{f_{#1}}
\newcommand{\AutGroup}{\text{Aut}}
\newcommand{\CTs}{\mathcal{C\!T}}
\newcommand{\BACHs}{\mathcal{B\!A}}
\newcommand{\SBACHs}{\mathcal{S\!B\!A}}
\newcommand{\IHom}{\mathsf{InHom}}
\newcommand{\IHOM}[2]{\mathsf{InHOM}_{#1}(#2)}
\newcommand{\LOCINJIHOM}[2]{\mathsf{LoInjInHOM}_{#1}(#2)}
\newcommand{\LOCINJHOM}[2]{\mathsf{LoInjHOM}_{#1}(#2)}
\newcommand{\LocInjIHom}{\mathsf{LoInjInHom}}
\newcommand{\LocInjHom}{\mathsf{LoInjHom}}
\newcommand{\EMergHom}{\mathsf{LoMeHom}}
\newcommand{\Aut}{\mathsf{Aut}}
\newcommand{\IncGraph}[1]{I({#1})}
\newcommand{\CIncGraph}[1]{I_{\hspace{-0.05em}c}({#1})}
\newcommand{\ColRef}[2][]{C^{#1}_{#2}}
\newcommand{\HPColRef}[2][]{H\!C^{#1}_{#2}}
\newcommand{\DegreeOf}[2][]{d_{#1}(#2)}
\newcommand{\DegreeSeqOf}[1]{\bar{d}(#1)}
\newcommand{\IDegreeOf}[2]{d_{#1}(#2)}
\newcommand{\IHomDegree}[3]{\IHom(#1,#2,#3)}
\newcommand{\IHomDegreeSeq}[3]{\IHom(#1,#2,#3)}
\newcommand{\DegreeSeqsOf}[1]{D_{#1}}
\providecommand{\keywords}[1]
{
   {\small	
  \textbf{Keywords ---} #1}
}
\theoremstyle{definition}
\newtheorem{definition}{Definition}
\theoremstyle{plain}
\theoremstyle{plain}
\theoremstyle{plain}
\newtheorem{lemma}[definition]{Lemma}
\theoremstyle{plain}
\newtheorem{theorem}[definition]{Theorem}
\theoremstyle{plain}
\newtheorem{corollary}[definition]{Corollary}
\begin{document}

\title{Color Refinement, Homomorphisms, and Hypergraphs}
\author{Jan Böker\\%
        RWTH Aachen University, Aachen, Germany\\%
        \href{mailto:boeker@informatik.rwth-aachen.de}{boeker@informatik.rwth-aachen.de}\\%
        ORCID: 0000-0003-4584-121X}
\date{}

\maketitle

\begin{abstract}
    Recent results show that the structural similarity of graphs can be characterized by counting homomorphisms to them:
    the Tree Theorem states that the well-known color-refinement algorithm does not distinguish two graphs $G$ and $H$ if and only if, for every tree $T$, the number of homomorphisms $\Hom(T, G)$ from $T$ to $G$ is equal to the corresponding number $\Hom(T, H)$ from $T$ to $H$ (Dell, Grohe, Rattan $2018$).
    We show how this approach transfers to hypergraphs by introducing a generalization of color refinement.
    We prove that it does not distinguish two hypergraphs $G$ and $H$ if and only if, for every connected Berge-acyclic hypergraph $B$, we have $\Hom(B, G) = \Hom(B, H)$.
    To this end, we show how homomorphisms of hypergraphs and of a colored variant of their incidence graphs are related to each other.
    This reduces the above statement to one about vertex-colored graphs.
\end{abstract}

\keywords{graph isomorphism, color refinement, hypergraph homomorphism numbers}

\section{Introduction}
\label{sec:in}

A result by Lovász \cite{Lovasz1967} states that a graph can be characterized up to isomorphism by counting homomorphisms from all graphs to it, i.e., two graphs $G$ and $H$ are isomorphic if and only if, for every graph $F$, the number of homomorphisms $\Hom(F,G)$ from $F$ to $G$ is equal to the number of homomorphisms $\Hom(F,H)$ from $F$ to $H$.
Equivalently, using the notion of the \textit{homomorphism vector} $\HOMVector(G) \coloneqq \Fam{\Hom(F,G)}{F \in \Graphs}$ of $G$, where $\Graphs$ denotes the class of all graphs, we have that two graphs $G$ and $H$ are isomorphic if and only if their homomorphism vectors $\HOMVector(G)$ and $\HOMVector(H)$ are equal.
However, the problem of computing the entries of a homomorphism vector is \#P-complete as it generalizes some well-known counting problems \cite[Section~$5.1$]{Lovasz2012}.
Hence, Dell, Grohe, and Rattan \cite{Dell2018} considered restrictions $\HOM{\SomeClass}{G} \coloneqq \Fam{\Hom(F,G)}{F \in \SomeClass}$ of homomorphism vectors to classes of graphs~$\SomeClass$ for which these entries can be computed efficiently.
This yields some surprisingly clean results, e.g., for the class $\Trees$ of all trees, the \textit{Tree Theorem} states that the homomorphism vectors $\HOM{\Trees}{G}$ and $\HOM{\Trees}{H}$ of two graphs $G$ and $H$ are equal if and only if $G$ and $H$ are not distinguished by \textit{color refinement}, a well-known heuristic algorithm for distinguishing non-isomorphic graphs (e.g.\ \cite{Grohe2017}).

\enquote{Graph matching} is a term used in machine learning for the problem of measuring the similarity of graphs (e.g., \cite{Conte2004}), where it has its applications in pattern recognition.
However, there is no universally agreed-upon notion of similarity, and a popular notion, the graph edit distance, describing the cost of transforming one graph into another by adding and deleting vertices and edges, is not only hard to compute but also does not reflect the structural similarity of two graphs very well \cite[Section~$1.5.1$]{Lovasz2012}.
Restricted homomorphism vectors offer an alternative way of comparing the structural similarity of graphs since, after suitably scaling them, they can be compared using standard vector norms.
As demonstrated in \cite{Dell2018}, one can also define an inner product on these homomorphism vectors, which yields a mapping that is known as a \textit{graph kernel} in machine learning (e.g., \cite{Vishwanathan2010}).
Graph kernels can be used to perform classification on graphs, and to this end, should capture the similarity of graphs well while still being efficiently computable.
Similarly to homomorphism vectors, state-of-the-art graph kernels are usually based on counting certain patterns in graphs, e.g., walks or subtrees.

The original observation by Lovász \cite{Lovasz1967}, stating that a graph can be characterized up to isomorphism by counting homomorphisms from all graphs, dates back to the $1960$s and has led to the theory of graph limits in the recent past \cite{Lovasz2012}.
Only very recently, the importance of homomorphism counts for many graph-related counting problems has been recognized \cite{Curticapean2017}:
for example, subgraph counts are just linear combinations of homomorphism counts.
Even more recent is the approach of characterizing the structural similarity of graphs by counting homomorphisms from restricted classes of graphs \cite{Dell2018}, which shows that well-known characterizations, e.g., the color-refinement algorithm, can also be stated in terms of homomorphism counts.

\subsection{Overview}
\label{sec:overview}

Color refinement is a simple and efficient but incomplete algorithm for distinguishing non-isomorphic graphs.
The algorithm iteratively computes a coloring of the vertices of a graph, and we say that color refinement \textit{distinguishes} two graphs if it computes different color patterns for them.
The Tree Theorem~\cite{Dell2018} states that color refinement can be characterized by counting homomorphisms from trees, i.e., for all graphs $G$ and $H$, we have $\HOM{\Trees}{G} = \HOM{\Trees}{H}$ if and only if color refinement does not distinguish $G$ and $H$.
By making use of the initial coloring, color refinement can easily be adapted to vertex-colored graphs.
This enables a straight-forward generalization of the Tree Theorem by counting (color-respecting) homomorphisms from vertex-colored trees to vertex-colored graphs.
Formally, if we let $\ColoredTrees$ denote the class of all vertex-colored trees, then for all vertex-colored graphs $G$ and $H$, we have $\HOM{\ColoredTrees}{G} = \HOM{\ColoredTrees}{H}$ if and only if color refinement does not distinguish $G$ and $H$.
We refer to this generalization as the \textit{Colored Tree Theorem}.

A possible (although rather conservative) generalization of the notion of a tree to hypergraphs is that of a \textit{connected} \textit{Berge-acyclic} hypergraph.
A hypergraph is called connected and Berge-acyclic if its incidence graph is connected and acyclic, respectively.
Similarly to the case of (vertex-colored) graphs, we obtain a surprisingly clean answer when counting homomorphisms from hypergraphs in the class $\BACHs$ of connected Berge-acyclic hypergraphs.
\begin{theorem}
    \label{th:hy:countingBACHHoms}
    For all hypergraphs $G$ and $H$, the following are equivalent:
    \begin{enumerate}
        \item $\HOM{\BACHs}{G} = \HOM{\BACHs}{H}$.
        \item Color refinement does not distinguish $G$ and $H$.
    \end{enumerate}
\end{theorem}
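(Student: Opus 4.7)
The overall strategy is to relate hypergraphs to their colored incidence graphs $\CIncGraph{G}$ --- the bipartite graph on vertex set $V(G) \cupdot E(G)$ whose edges encode incidences, with one color for original vertices and another for edge-nodes --- and thereby reduce the statement to the Colored Tree Theorem for vertex-colored graphs. The two key reductions are: (i) hypergraph homomorphism counts from $B$ into $G$ agree with (colored) homomorphism counts from $\CIncGraph{B}$ into $\CIncGraph{G}$; and (ii) the hypergraph color refinement introduced in this paper agrees with standard color refinement applied to the colored incidence graph.

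For (i), I would work with the convention in which a hypergraph homomorphism from $B$ to $G$ is a pair $(\phi_V,\phi_E)$ with $\phi_V \colon V(B) \to V(G)$ and $\phi_E \colon E(B) \to E(G)$ satisfying $\phi_V(v) \in \phi_E(e)$ whenever $v \in e$. Under this convention the bijection $\Hom(B,G) \leftrightarrow \Hom(\CIncGraph{B}, \CIncGraph{G})$ is immediate: the two-coloring forces vertex-nodes to map to vertex-nodes and edge-nodes to map to edge-nodes, and edge preservation in the incidence graph is exactly the incidence condition $\phi_V(v) \in \phi_E(e)$. For (ii), I would show essentially by inspection that one round of generalized hypergraph color refinement performs the same refinement of vertex- and edge-classes as two consecutive rounds of standard color refinement on $\CIncGraph{G}$: one round updating edge-node colors from adjacent vertex-colors, and one round updating vertex-node colors from adjacent edge-colors.

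With these two reductions in hand, I would apply the Colored Tree Theorem to $\CIncGraph{G}$ and $\CIncGraph{H}$. Since both incidence graphs use only two colors, a colored tree $T$ can have nonzero homomorphism count into $\CIncGraph{G}$ only if $T$ is itself properly two-colored and bipartite; such a tree is exactly $\CIncGraph{B}$ for a uniquely determined connected Berge-acyclic hypergraph $B$ (the vertex-nodes become $V(B)$ and the edge-nodes encode the hyperedges as their neighborhoods), and conversely every $B \in \BACHs$ yields such a tree because Berge-acyclicity is by definition acyclicity of the incidence graph. Chaining the equivalences yields the theorem: $\HOM{\BACHs}{G} = \HOM{\BACHs}{H}$ iff $\HOM{\ColoredTrees}{\CIncGraph{G}} = \HOM{\ColoredTrees}{\CIncGraph{H}}$ iff color refinement does not distinguish $\CIncGraph{G}$ and $\CIncGraph{H}$ iff hypergraph color refinement does not distinguish $G$ and $H$.

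I expect the main obstacle to be pinning down the homomorphism correspondence~(i) --- in particular, setting up the definition of a hypergraph homomorphism so that it comes equipped with an explicit edge-assignment, so that the correspondence with incidence-graph homomorphisms is genuinely a bijection rather than a many-to-one map that collapses the several edge assignments compatible with a single vertex map. Once that is settled, the two-colored-tree $\leftrightarrow \BACHs$ correspondence and the color-refinement equivalence are largely bookkeeping, and the Colored Tree Theorem supplies the substantive content.
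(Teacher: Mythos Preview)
Your reduction~(ii) is essentially the paper's \Cref{le:hy:HyperCRIncedenceCR}, and your overall plan to pass through the Colored Tree Theorem on $\CIncGraph{G}$ is the right one. The gap is in~(i): the ``convention'' you adopt for a hypergraph homomorphism, namely $\phi_V(v)\in\phi_E(e)$ whenever $v\in e$, is \emph{not} the paper's definition of $\Hom$. The paper requires $h_V(\EdgeFunOf{F}(e))=\EdgeFunOf{G}(h_E(e))$, an equality. What you wrote is precisely the paper's \emph{incidence homomorphism}, and the bijection you describe is $\IHom(B,G)=\Hom(\CIncGraph{B},\CIncGraph{G})$, not $\Hom(B,G)=\Hom(\CIncGraph{B},\CIncGraph{G})$. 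In general $\Hom(B,G)\neq\IHom(B,G)$: a map of colored incidence graphs may send the vertices of a hyperedge of $B$ to a proper subset of a hyperedge of $G$ (cf.\ the paper's \Cref{fig:hy:mergingExample}), which is allowed for $\IHom$ but forbidden for $\Hom$. So your argument, as written, proves \Cref{th:hy:countingMBACHIncHoms} (the $\IHom$ version), not \Cref{th:hy:countingBACHHoms}.

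Closing that gap is exactly where the paper spends its effort: Lemmas~\ref{le:hy:IHomLocInjIHom}--\ref{le:hy:LocInjIHomLocInjHom} factor $\IHom$ and $\Hom$ through locally injective and locally bijective incidence homomorphisms via invertible triangular matrices (locally merging and leaf-adding maps), and \Cref{le:hy:incidencehomshoms} uses those identities to show $\IHOM{\BACHs}{G}=\IHOM{\BACHs}{H}\iff\HOM{\BACHs}{G}=\HOM{\BACHs}{H}$. You would need an argument of comparable strength here; you cannot sidestep it by redefining $\Hom$. (A secondary point: your claim that every properly two-colored tree is $\CIncGraph{B}$ for some $B\in\BACHs$ fails for the single edge-colored vertex, since empty hyperedges are disallowed; the paper handles this with a short interpolation argument inside the proof of \Cref{th:hy:countingMBACHIncHoms}.)
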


Of course, color refinement in the usual sense is only defined on (vertex-colored) graphs, which is why we propose a generalization of it to hypergraphs in \Cref{sec:hypergraphCR};
\Cref{th:hy:countingBACHHoms} refers to this generalization.
As this generalization turns out to be equivalent to the usual color-refinement algorithm applied to a colored variant of a hypergraph's incidence graph, we are able to \enquote{reduce} \Cref{th:hy:countingBACHHoms} to the Colored Tree Theorem instead of adapting the proof of \cite{Dell2018}.
Here, the interesting (and laborious) part is to show how homomorphisms between hypergraphs are related to homomorphisms between their colored incidence graphs and how counts of these can be obtained from each other.
This leads to the notion of an \textit{incidence homomorphism} between hypergraphs in \Cref{sec:homomorphismsFromBACHs}, which is used to prove \Cref{th:hy:countingBACHHoms} in \Cref{sec:hy:BACHs}.
Our approach does not only directly generalize to hypergraphs that possibly have parallel edges, but is also simplified by doing so;
we nevertheless obtain the corresponding statement about simple hypergraphs as a corollary in \Cref{sec:simplehypergraphs}.

With \Cref{th:hy:countingBACHHoms}, one might wonder how the Tree Theorem generalizes to directed graphs.
An obvious candidate for a class of directed graphs to count homomorphisms from is the class of connected directed acyclic graphs (DAGs) since a connected DAG can be seen as the directed concept corresponding to a tree.
Surprisingly, counting homomorphisms from DAGs is already too expressive and characterizes an arbitrary directed graph up to isomorphism.
This result is already implicit in the second homomorphism-related work of Lovász \cite{Lovasz1971}, which is concerned with the \textit{cancellation law} among finite relational structures, and we briefly revisit it in \Cref{sec:di}.

\subsection{Preliminaries}

$\N$ denotes the set of non-negative integers.
For $n \in \N$, we let $\NumbersTo{n} \coloneqq \Set{1, \dots, n}$.
A multiset is denoted using the notation $\MultiSet{0, 1, 1}$.
All relational structures that we consider are finite, and we use standard graph-theoretic terminology and notation without explicitly introducing it, e.g., for any graph-like structure $G$, the sets of its vertices and edges are denoted by $V(G)$ and $E(G)$, respectively.
Unless explicitly specified otherwise, the terms \textit{graph} and \textit{directed graph} refer to simple graphs and simple directed graphs, respectively, while for the sake of brevity, the term \textit{hypergraph} is used for hypergraphs that may have parallel edges.
Formally, a hypergraph is a tuple $G = (V, E, f)$ where $V$ is a set of vertices, $E$ a set of edges, and $f \colon E \to \Powerset{V} \setminus \Set{\emptyset}$ the incidence function assigning a non-empty set of vertices to every edge, where we usually write $\EdgeFunOf{G}$ to denote $f$.
If $f$ is injective, i.e., if $G$ does not have parallel edges, then we call $G$ a \textit{simple hypergraph}.
The incidence graph of a hypergraph $G$ is the bipartite graph $\IncGraph{G}$ with $V(\IncGraph{G}) \coloneqq V(G) \cupdot E(G)$ and $E(\IncGraph{G}) \coloneqq \Set{ve \mid v \in \EdgeFunOf{G}(e) \text{ for } e \in E(G)}$.

We work with \textit{infinite matrices}, which are functions $A \colon I \times J \to \R$ where $I$ and $J$ are countable and locally finite posets.
The product $A \cdot B \colon I \times J \to \R$ of two infinite matrices $A \colon I \times K \to \R$ and $B \colon K \times J \to \R$ is defined via $(A \cdot B)_{ij} \coloneqq \sum_{k \in K} A_{ik} \cdot B_{kj}$ for all $i \in I,\, j \in J$ as long as these sums are finite; otherwise, we leave it undefined, which means that this product is not associative, and we follow the convention that this operator is right-associative to reduce the amount of needed parentheses.
An infinite matrix $A$ is called lower triangular and upper triangular if we have $A_{ij} = 0$ for all $i,j$ with $j \not\le i$ and $A_{ij} = 0$ for all $i,j$ with $i \not\le j$, respectively.
As in the finite case, forward substitution yields that lower and upper triangular infinite matrices with non-zero diagonal entries have left inverses \cite{Dell2018} that again are lower and upper triangular, respectively.
For simplicity, we usually refer to infinite matrices just as matrices.

A homomorphism from a hypergraph $F$ to a hypergraph $G$ is a pair $(h_V, h_E)$ of mappings $h_V \colon V(F) \to V(G)$ and $h_E \colon E(F) \to E(G)$ such that we have $h_V(\EdgeFunOf{F}(e)) = \EdgeFunOf{G}(h_E(e))$ for every $e \in E(F)$, and $\Hom(F,G)$ denotes the number of homomorphisms from $F$ to $G$.
For a hypergraph~$G$, its homomorphism vector is denoted by $\HOMVector(G)$, and the restriction of $\HOMVector(G)$ to a class of hypergraphs~$\SomeClass$ is denoted by $\HOM{\SomeClass}{G}$.
For every isomorphism class of hypergraphs, we fix a representative and call it the \textit{isomorphism type} of the hypergraphs in the class.
We view $\Hom$ as an infinite matrix indexed by the isomorphism types, which are sorted by the sums of their numbers of vertices and edges, where ties are resolved arbitrarily.
Then, for a hypergraph $G$, its homomorphism vector $\HOMVector(G)$ can be viewed as a column of $\Hom$.
We use similar notation for other types of mappings without explicitly introducing it.

Since, to count homomorphisms from a non-connected graph, one can count homomorphisms from its components instead, we usually restrict ourselves to homomorphism counts from connected graphs.
The same holds for directed graphs and hypergraphs.
$\Aut$ is the diagonal matrix whose diagonal entry $\Aut(G, G)$, which we usually denote just by $\Aut(G)$, contains the number of automorphisms of the connected hypergraph $G$.

\section{Hypergraphs}
\label{sec:hy}

\subsection{Hypergraph Color Refinement}
\label{sec:hypergraphCR}

Color refinement colors the vertices of a graph $G$ by setting $\ColRef[G]{0}(v) \coloneqq 1$ for every $v \in V(G)$ and $\ColRef[G]{i+1}(v) \coloneqq \MultiSet{\ColRef[G]{i}(u) \mid u \in \Neighbors[G]{v}}$ for every $v \in V(G)$ and every $i \ge 0$.
In a hypergraph, the adjacency of a vertex $v$ is not fully determined by the set of its neighbors, i.e., the set of vertices that share an edge with $v$, as this does not state \textit{how} $v$ is connected to them.
To capture also this information, we rather look at the edges $v$ is incident to:
a coloring of the vertices of a hypergraph induces a coloring of its edges.
For a hypergraph $G$, we define $\HPColRef[G]{0}(v) \coloneqq 1$ for every $v \in V(G)$ and
\begin{align*}
    \HPColRef[G]{i+1}(v) \coloneqq \MultiSet{\MultiSet{\HPColRef[G]{i}(u) \mid u \in \EdgeFunOf{G}(e)} \mid e \in E \text{ with } v \in \EdgeFunOf{G}(e)}
\end{align*}
for every $v \in V(G)$ and every $i \ge 0$.
Color refinement distinguishes two hypergraphs $G$ and $H$ if there is an $i \ge 0$ such that the colorings are \textit{unbalanced}, i.e., that we have $\MultiSet{\HPColRef[G]{i} (v) \mid v \in V(G)} \neq \MultiSet{\HPColRef[H]{i} (v) \mid v \in V(H)}$.

Thus, two vertices of the same color get different colors in a refinement round if they have a different number of incident edges of an induced color $c$.
Note that such an induced color of an edge is a multiset since distinct vertices of the same edge may have the same color.
It is not hard to see that, when interpreting a graph as a hypergraph, the two definitions are equivalent:
an inductive argument yields that excluding the color of $v \in V(G)$ itself from the color $\MultiSet{\HPColRef[G]{i}(u) \mid u \in \EdgeFunOf{G}(e)}$ induced on the edge $e \in E$ with $v \in \EdgeFunOf{G}(e)$ does not make a difference.
Then, the only difference is that each color of a neighbor is placed into its own multiset in the more general definition.

\begin{figure}[htbp]
    \centering
    \begin{minipage}{0.25\textwidth}
    \scalebox{0.8}{
    \begin{tikzpicture}[node distance = \VertexDistance]
		\node[hypervertex, label={270:$v_1$}] (v0) {};
		\node[hypervertex, label={270:$v_2$}, right = of v0] (v1) {};
		\node[hypervertex, label={270:$v_3$}, right = of v1] (v2) {};

		\node[hyperedge, label={90:$e_1$}] at ($(v0)!0.5!(v1) + (0,\EdgeDistance)$) (ge0) {};
		\node[hyperedge, label={90:$e_2$}] at ($(v1)!0.5!(v2) + (0,\EdgeDistance)$) (ge1) {};

		\draw[hyperedgeedge] (v0) to (ge0);
		\draw[hyperedgeedge] (v0) to (ge1);
		\draw[hyperedgeedge] (v1) to (ge0);
		\draw[hyperedgeedge] (v1) to (ge1);
		\draw[hyperedgeedge] (v2) to (ge1);
    \end{tikzpicture}
    }
    \end{minipage}
    \begin{minipage}{0.35\textwidth}
    \begin{tabular}{c|c|c}
        $v$ & $\HPColRef{0}(v)$ & $\HPColRef{1}(v)$\\ \hline
        $v_1$ & $1$ & $\MultiSet{ \MultiSet{1, 1}, \MultiSet{1,1,1}}$\\
        $v_2$ & $1$ & $\MultiSet{ \MultiSet{1, 1}, \MultiSet{1,1,1}}$ \\
        $v_3$ & $1$ & $\MultiSet{ \MultiSet{1,1,1}}$
    \end{tabular}
    \end{minipage}
    \caption{Color refinement on a hypergraph}
    \label{fig:hy:hypergraphCRExample}
\end{figure}
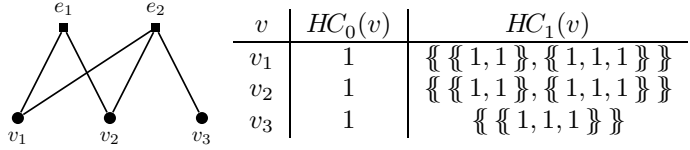

\Cref{fig:hy:hypergraphCRExample} shows an example of color refinement on a hypergraph, which is represented by its incidence graph, where the vertices and edges are depicted as circles and squares, respectively;
this distinction is not made in the incidence graph itself.
To justify our notion of color refinement, we observe its relation to color refinement on the incidence graph of a hypergraph, which also colors its edges:
In a first step, every edge gets assigned the colors of its incident vertices.
In a second step, every vertex gets assigned the colors of its incident edges.
Hence, a single step of color refinement on a hypergraph corresponds to two steps of color refinement on its incidence graph.

However, to formally obtain an equivalence between the two notions, we have to deal with the fact that the additional colors of the edges present in color refinement on an incidence graph may obscure unbalanced vertex partitions, which may happen since an incidence graph does not indicate whether one of its vertices is actually a vertex or an edge of the hypergraph, i.e., vertices of the one hypergraph may be confused with edges of the other.
To avoid this, we differentiate these right from the beginning by defining the \textit{colored incidence graph} $\CIncGraph{G}$ of a hypergraph $G$, which is the vertex-colored graph obtained by taking the incidence graph $\IncGraph{G}$ and coloring the elements of $V(G)$ and $E(G)$ with two different colors, say $1$ for $V(G)$ and $2$ for $E(G)$.
In general, for color refinement on a vertex-colored graph, one has to include a vertex's old color in the new one in every refinement round to guarantee that we indeed obtain a refinement.
However, a simple inductive argument yields that this is not necessary for colored incidence graphs.

\begin{lemma}
    \label{le:hy:HyperCRIncedenceCR}
    For all hypergraphs $G$ and $H$, the following are equivalent:
    \begin{enumerate}
        \item Color refinement does not distinguish $G$ and $H$. \label{le:hy:HyperCRIncedenceCR:Multi}
        \item Color refinement does not distinguish $\CIncGraph{G}$ and $\CIncGraph{H}$. \label{le:hy:HyperCRIncedenceCR:Inc}
    \end{enumerate}
\end{lemma}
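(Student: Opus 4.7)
The plan is to pair one round of hypergraph color refinement with two rounds of color refinement on the colored incidence graph, and to reduce the equivalence to a matching of stable partitions. The heart of the proof is the correspondence
\[
\ColRef[\CIncGraph{G}]{2i}(v) = \HPColRef[G]{i}(v) \quad \text{for every } v \in V(G),
\]
together with its edge-side companion
\[
\ColRef[\CIncGraph{G}]{2i+1}(e) = \MultiSet{\HPColRef[G]{i}(u) \mid u \in \EdgeFunOf{G}(e)} \quad \text{for every } e \in E(G),
\]
both read modulo the obvious identification of the nested-multiset color structures produced by the two algorithms. This is proved by a straightforward induction on $i \ge 0$: the base case uses that $V(G)$ carries initial color $1$ in $\CIncGraph{G}$, and the step just unfolds the recursive definition of $\ColRef$ at rounds $2i+1$ and $2i+2$ and substitutes the hypothesis. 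Here I invoke the observation from the paragraph preceding the lemma: since the initial colors $1$ and $2$ keep $V(G)$ and $E(G)$ separated throughout all rounds of incidence CR, the correspondence simplifies cleanly, without requiring old-color information to be carried forward.

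To convert the correspondence into the claimed equivalence, I use the standard fact that color refinement does not distinguish two (colored) graphs iff the common stable partition on their disjoint union has matching size-distributions. This lets me restate (1) as \enquote{the stable hypergraph partition on $V(G) \cupdot V(H)$ is balanced between $G$ and $H$} and (2) as \enquote{the stable incidence partition on $\CIncGraph{G} \cupdot \CIncGraph{H}$ is balanced on both its $V$-side and its $E$-side.} The correspondence above identifies the $V$-side of the stable incidence partition with the stable hypergraph partition, so $V$-side balance is exactly (1). The $E$-side of the stable incidence partition is determined by the $V$-side, since two edges share an $E$-side color iff they have the same \enquote{edge-type} multiset of incident $V$-side colors. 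A simple double count then shows that $V$-side balance forces $E$-side balance: for each edge-type $\tau$, the number of edges of $G$ of type $\tau$ equals $\tfrac{1}{|\tau|} \sum_{v \in V(G)} \operatorname{mult}(\tau, \HPColRef[G]{*}(v))$, where $*$ denotes the stable round, and this expression depends only on the $V$-side multiset, which agrees between $G$ and $H$ by assumption.

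The main obstacle is the bookkeeping required to treat the nested-multiset colors produced by the two algorithms as comparable objects across $G$ and $H$; this is resolved by interpreting everything as induced partitions of the disjoint union $G \cupdot H$, where colors become concrete partition classes rather than symbolic labels, so that the formal equalities above can be read as genuine equalities of partitions.
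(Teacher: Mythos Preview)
Your proposal is correct and follows essentially the same route as the paper: both hinge on the inductive identification $\HPColRef[G]{i}=\restr{\ColRef[\CIncGraph{G}]{2i}}{V(G)}$ and then derive edge-side balance from vertex-side balance by the double count $\abs{\{e:\text{type }\tau\}}=\tfrac{1}{\abs{\tau}}\sum_{v}\#\{\text{incident edges of type }\tau\}$, using that hyperedges are non-empty. Your framing via the stable partition on the disjoint union and the explicit odd-round edge formula are cosmetic differences; the paper instead fixes a large enough $i$ so that the incidence colorings are stable and runs the same count directly in $\CIncGraph{G}$ and $\CIncGraph{H}$.
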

\begin{proof}
    First, we prove that we have $\HPColRef[G]{i} = \restr{\ColRef[\CIncGraph{G}]{2i}}{V(G)}$ for every hypergraph $G$ and every $i \ge 0$, which we prove by induction on $i$.
    The induction basis $i = 0$ is trivial by our choice of colors of $\CIncGraph{G}$.
    For the inductive step $i > 0$, we have
    \begin{align*}
        \ColRef[\CIncGraph{G}]{2i}(v) &= \MultiSet{\ColRef[\CIncGraph{G}]{2i - 1}(u) \mid u \in \Neighbors[\CIncGraph{G}]{v}}\\
        &= \MultiSet{\ColRef[\CIncGraph{G}]{2i - 1}(e) \mid e \in E(G) \text{ with } v \in \EdgeFunOf{G}(e)}\\
        &= \MultiSet{\MultiSet{\ColRef[\CIncGraph{G}]{2(i-1)}(u) \mid u \in \Neighbors[\CIncGraph{G}]{e}} \mid e \in E(G) \text{ with } v \in \EdgeFunOf{G}(e)}\\
        &= \MultiSet{\MultiSet{\ColRef[\CIncGraph{G}]{2(i-1)}(u) \mid u \in \EdgeFunOf{G}(e)} \mid e \in E(G) \text{ with } v \in \EdgeFunOf{G}(e)}\\
        &=\HPColRef[G]{i}(v)
    \end{align*}
    for every $v \in V(G)$, where we obtain the last equality by applying the induction hypothesis.

    To prove the equivalence of \ref{le:hy:HyperCRIncedenceCR:Multi} and \ref{le:hy:HyperCRIncedenceCR:Inc}, let $G$ and $H$ be hypergraphs.
    The backward direction \Direction{\ref{le:hy:HyperCRIncedenceCR:Inc}}{\ref{le:hy:HyperCRIncedenceCR:Multi}} follows directly:
    if we have that
    \begin{align*}
        \MultiSet{\ColRef[\CIncGraph{G}]{2i}(v) \mid v \in V(\CIncGraph{G})} = \MultiSet{\ColRef[\CIncGraph{H}]{2i}(v) \mid v \in V(\CIncGraph{H})}
    \end{align*}
    holds for every $i \ge 0$, then the restrictions of these colorings to $V(G)$ and $V(H)$ are also balanced due to the initial colorings chosen for the colored incidence graphs, and by the identity proven above, these restrictions are just the colorings computed by color refinement on the hypergraphs $G$ and $H$.
    This is actually the only point of the proof where the coloring of a colored incidence graph is needed.

    Proving the forward direction \Direction{\ref{le:hy:HyperCRIncedenceCR:Multi}}{\ref{le:hy:HyperCRIncedenceCR:Inc}} is more involved since we have to make sure that the colors of the edges in the incidence graphs do not contain any additional information. Intuitively, this holds since every edge is connected to a vertex, which receives its information in the next refinement round.
    For a formal proof, assume that the two colorings $\HPColRef[G]{i}$ and $\HPColRef[H]{i}$ are balanced for every $i \ge 0$, and to prove the claim, choose $i \ge 0$ large enough such that the colorings $\ColRef[\CIncGraph{G}]{2i}$ and $\ColRef[\CIncGraph{H}]{2i}$ are stable.
    By the assumption and the identity proven above, these colorings are balanced on $V(G)$ and $V(H)$, i.e., we have
    \begin{align*}
        \MultiSet{\ColRef[\CIncGraph{G}]{2i}(v) \mid v \in V(G)} = \MultiSet{\ColRef[\CIncGraph{H}]{2i}(v) \mid v \in V(H)},
    \end{align*}
    and it suffices to prove that they are also balanced on $E(G)$ and $E(H)$.
    Then, we have that the whole colorings are balanced, and since they are stable, the colorings in all further refinement rounds are also balanced.

    Let $c \in C \coloneqq \ImageOf{\ColRef[\CIncGraph{G}]{2i}} \cup \ImageOf{\ColRef[\CIncGraph{H}]{2i}}$ be a color used by one of the colorings.
    Since the colorings are stable and the first round of color refinement determines the degrees of the vertices, every $e \in E(G) \cup E(H)$ of color $c$ has the same size~$m$.
    Using $\MultiSet{\ColRef[\CIncGraph{G}]{2i}(v) \mid v \in V(G)} = \MultiSet{\ColRef[\CIncGraph{H}]{2i}(v) \mid v \in V(H)}$ and the stability of the colorings yields
    \begin{align*}
        \abs{\Set{e \in E(G) \mid \ColRef[\CIncGraph{G}]{2i}(e) = c}} &= \frac{1}{m} \cdot \sum_{v \in V(G)} \abs{\Neighbors[\CIncGraph{G}]{v} \cap {\ColRef[\CIncGraph{G}]{2i}}^{-1}(c)}\\
        &= \frac{1}{m} \cdot \sum_{d \in C} \sum_{\substack{v \in V(G),\\ \ColRef[\CIncGraph{G}]{2i}(v) = d}} \abs{\Neighbors[\CIncGraph{G}]{v} \cap {\ColRef[\CIncGraph{G}]{2i}}^{-1}(c)}\\
        &= \frac{1}{m} \cdot \sum_{d \in C} \sum_{\substack{v \in V(H),\\ \ColRef[\CIncGraph{H}]{2i}(v) = d}} \abs{\Neighbors[\CIncGraph{H}]{v} \cap {\ColRef[\CIncGraph{H}]{2i}}^{-1}(c)}\\
        &= \abs{\Set{e \in E(H) \mid \ColRef[\CIncGraph{H}]{2i}(e) = c}}
    \end{align*}
    since every edge of a hypergraph is non-empty.
    This means that the colorings are balanced on $E(G)$ and $E(H)$ and finishes the proof.
\end{proof}

\subsection{Incidence Homomorphisms}
\label{sec:homomorphismsFromBACHs}

Recall that a hypergraph is connected and Berge-acyclic if and only if its incidence graph is a tree.
With the Colored Tree Theorem, \Cref{le:hy:HyperCRIncedenceCR} already yields that two hypergraphs $G$ and $H$ are not distinguished by color refinement if and only if $\HOM{\CTs}{\CIncGraph{G}} = \HOM{\CTs}{\CIncGraph{H}}$, i.e., we already have a characterization of color refinement by counting homomorphisms from vertex-colored trees to the hypergraphs' incidence graphs.
This motivates a \enquote{reduction} to prove \Cref{th:hy:countingBACHHoms}, i.e., instead of adapting the proof of the Tree Theorem by defining an unfolding of a hypergraph into a Berge-acyclic one, we relate homomorphisms between colored incidence graphs back to homomorphisms between hypergraphs.

To this end, we first re-formulate $\HOM{\CTs}{\CIncGraph{G}} = \HOM{\CTs}{\CIncGraph{H}}$ in hypergraph terms.
Observe that, at this point, it is convenient that we consider hypergraphs with parallel edges because, when interpreting a colored tree as an incidence graph of a hypergraph, it may very well have parallel edges, or more precisely, parallel loops.
Thus, when taking the step from vertex-colored trees to hypergraphs, the only noteworthy special case is the colored tree corresponding to an empty edge, which does not have a corresponding hypergraph as empty edges are disallowed by definition.

However, just interpreting vertex-colored trees as hypergraphs does not suffice as, for hypergraphs $G$ and $H$, the homomorphisms between the colored incidence graphs $\CIncGraph{G}$ and $\CIncGraph{H}$ do not necessarily correspond to homomorphisms between $G$ and $H$.
While every homomorphism $(h_V, h_E)$ from $G$ to $H$ gives us a corresponding homomorphism $h_V \cup h_E$ from $\CIncGraph{G}$ to $\CIncGraph{H}$, the converse does not hold:
a homomorphism from $\CIncGraph{G}$ to $\CIncGraph{H}$ does not have to map the vertices of an edge of $G$ to a \textit{full} edge of $H$ but only to a subset of such an edge, cf.\ \Cref{fig:hy:mergingExample}.
To capture this behavior in terms of hypergraphs, for hypergraphs $G$ and $H$, we call a pair $(h_V, h_E)$ of mappings $h_V \colon V(G) \to V(H)$ and $h_E \colon E(G) \to E(H)$ satisfying $h_V(\EdgeFunOf{G}(e)) \subseteq \EdgeFunOf{H}(h_E(e))$ for every $e \in E(G)$ an \textit{incidence homomorphism} from $G$ to $H$.
That is, the equality in the definition of a homomorphism is relaxed to an inclusion, which also means that every homomorphism \textit{is} an incidence homomorphism.

Observe that we have a one-to-one correspondence between the incidence homomorphisms from $G$ to $H$ and the homomorphisms from $\CIncGraph{G}$ to $\CIncGraph{H}$.
In particular, if we let $\IHom(G,H)$ denote the number of incidence homomorphisms from $G$ to $H$, we have $\IHom(G, H) = \Hom(\CIncGraph{G}, \CIncGraph{H})$.
This lets us express the requirement $\HOM{\CTs}{\CIncGraph{G}} = \HOM{\CTs}{\CIncGraph{H}}$ in terms of connected Berge-acyclic hypergraphs, where a simple interpolation argument takes care of the colored tree corresponding to an empty edge.

\begin{lemma}
    \label{th:hy:countingMBACHIncHoms}
    For all hypergraphs $G$ and $H$, the following are equivalent:
    \begin{enumerate}
        \item $\IHOM{\BACHs}{G} = \IHOM{\BACHs}{H}$. \label{th:hy:countingMBACHIncHoms:BACHs}
        \item Color refinement does not distinguish $G$ and $H$. \label{th:hy:countingMBACHIncHoms:ColorRefinement}
    \end{enumerate}
\end{lemma}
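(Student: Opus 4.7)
The strategy is to chain three ingredients that have all been established or noted in the paper: the Colored Tree Theorem for vertex-colored graphs; \Cref{le:hy:HyperCRIncedenceCR}, which relates color refinement on $G, H$ to color refinement on $\CIncGraph{G}, \CIncGraph{H}$; and the bijection between incidence homomorphisms and homomorphisms of colored incidence graphs, which in particular gives $\IHom(B, G) = \Hom(\CIncGraph{B}, \CIncGraph{G})$. To match the class $\BACHs$ with the relevant colored trees, the following small classification is useful: since $\CIncGraph{G}$ is bipartite with $V(G)$ colored $1$ and $E(G)$ colored $2$, $\Hom(T, \CIncGraph{G}) = 0$ unless $T$ is a properly bi-colored tree using only these two colors. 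Every such $T$ containing at least one color-$1$ vertex is of the form $T = \CIncGraph{B}$ for a uniquely determined $B \in \BACHs$ (take the color-$1$ vertices as $V(B)$, the color-$2$ vertices as $E(B)$, and $\EdgeFunOf{B}(e)$ as the neighbourhood of $e$ in $T$; non-emptiness of $\EdgeFunOf{B}(e)$ holds because any tree on at least two vertices has no isolated vertex). The only relevant colored tree not of this form is the isolated color-$2$ vertex $T_\emptyset$, for which $\Hom(T_\emptyset, \CIncGraph{G}) = \abs{E(G)}$.

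Given this classification, the direction \Direction{\ref{th:hy:countingMBACHIncHoms:ColorRefinement}}{\ref{th:hy:countingMBACHIncHoms:BACHs}} is essentially immediate: if color refinement does not distinguish $G$ and $H$, then \Cref{le:hy:HyperCRIncedenceCR} gives the same for $\CIncGraph{G}$ and $\CIncGraph{H}$, and the Colored Tree Theorem yields $\HOM{\CTs}{\CIncGraph{G}} = \HOM{\CTs}{\CIncGraph{H}}$. Specializing to $T = \CIncGraph{B}$ for each $B \in \BACHs$ and invoking the bijection closes out this direction.

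For the direction \Direction{\ref{th:hy:countingMBACHIncHoms:BACHs}}{\ref{th:hy:countingMBACHIncHoms:ColorRefinement}}, assume $\IHOM{\BACHs}{G} = \IHOM{\BACHs}{H}$. The classification already gives $\Hom(T, \CIncGraph{G}) = \Hom(T, \CIncGraph{H})$ for every colored tree $T \neq T_\emptyset$, so the only issue is $T_\emptyset$. To close this gap, for each $k \ge 1$ let $B_k \in \BACHs$ consist of a single edge incident to $k$ pairwise distinct vertices. A direct count gives
\[
    \IHom(B_k, G) = \sum_{e \in E(G)} \abs{\EdgeFunOf{G}(e)}^k,
\]
so the hypothesis forces the edge-size multisets $\MultiSet{\abs{\EdgeFunOf{G}(e)} \mid e \in E(G)}$ and $\MultiSet{\abs{\EdgeFunOf{H}(e)} \mid e \in E(H)}$ to have identical power sums for every $k \ge 1$. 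A standard Vandermonde/Newton-Girard interpolation argument then forces the two multisets to coincide; in particular $\abs{E(G)} = \abs{E(H)}$, supplying the missing match for $T_\emptyset$. Hence $\HOM{\CTs}{\CIncGraph{G}} = \HOM{\CTs}{\CIncGraph{H}}$, and the Colored Tree Theorem combined with \Cref{le:hy:HyperCRIncedenceCR} finishes the argument.

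The main obstacle is precisely the mismatch between $\CTs$ and $\BACHs$ caused by $T_\emptyset$ having no hypergraph counterpart, since empty edges are forbidden by the definition of a hypergraph; the interpolation step above is the device that bridges this gap. Everything else is routine bookkeeping around the three ingredients listed at the start.
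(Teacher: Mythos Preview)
Your proposal is correct and follows essentially the same approach as the paper's own proof: both reduce via \Cref{le:hy:HyperCRIncedenceCR} and the Colored Tree Theorem to matching $\HOM{\CTs}{\CIncGraph{G}}$ with $\IHOM{\BACHs}{G}$, handle the easy direction by specializing to $T=\CIncGraph{B}$, and bridge the single exceptional tree $T_\emptyset$ by the Vandermonde interpolation with the single-edge hypergraphs $B_k$. The paper spells out the Vandermonde system explicitly and recovers the individual edge-size counts $\abs{E(G)}_i$, whereas you invoke the power-sum/Newton--Girard identification of the edge-size multisets directly; these are the same argument in slightly different dress.
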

\begin{proof}
    By \Cref{le:hy:HyperCRIncedenceCR} and the Colored Tree Theorem, it suffices to prove the equivalence of \ref{th:hy:countingMBACHIncHoms:BACHs} and $\HOM{\CTs}{\CIncGraph{G}} = \HOM{\CTs}{\CIncGraph{H}}$ for all hypergraphs $G$ and $H$, where the backward direction directly follows because, for a connected Berge-acyclic hypergraph $B$, the incidence graph $\CIncGraph{B}$ is a tree, which yields $\IHom(B, G) = \Hom(\CIncGraph{B}, \CIncGraph{G}) = \Hom(\CIncGraph{B}, \CIncGraph{H}) = \IHom(B, H)$.

    For the forward direction, we let $G$ be a hypergraph and prove that the entries of $\HOM{\CTs}{\CIncGraph{G}}$ are determined by those of $\IHOM{\BACHs}{G}$.
    This direction is not as straightforward as the backward direction because not every vertex-colored tree occurs as the colored incidence graph of a hypergraph.
    If a vertex-colored tree $T$ is isomorphic to the colored incidence graph of a connected Berge-acyclic hypergraph~$B$, then we have $\Hom(T, \CIncGraph{G}) = \IHom(B, G)$ and are done.
    Otherwise, we distinguish two cases:
    If $T$ uses a color distinct from the ones used in the definition of the colored incidence graph or if there is an edge between vertices of the same color, then there is no homomorphism from $T$ to $\CIncGraph{G}$, and we trivially have $\Hom(T, \CIncGraph{G}) = 0$.
    For the second case, we have to assume that $T$ is an isolated vertex of the color used for hyperedges.
    In this case, we have $\Hom(T, \CIncGraph{G}) = \abs{E(G)}$ and use an interpolation argument to show that the number of edges $\abs{E(G)}$ of $G$ can be obtained from the entries of $\IHOM{\BACHs}{G}$.

%
%

    \newcommand{\BkOf}[1]{B_{#1}}%
    \newcommand{\AbsEGiOf}[1]{\abs{E(G)}_{#1}}%
    Let $n \coloneqq \abs{V(G)} = \IHom(\KnOf{1},G)$ be the number of vertices of $G$ and, for every $i \in \NumbersTo{n}$, let
    $\AbsEGiOf{i} \coloneqq \abs{\Set{e \in E(G) \mid \abs{\EdgeFunOf{G}(e)} = i}}$ denote the number of edges of $G$ of size $i$.
    Then, we have $\abs{E(G)} = \sum_{i = 1}^{n} \AbsEGiOf{i}$, and it suffices to show that these values can be obtained from $\IHOM{\BACHs}{G}$.
    For $k \ge 1$, we define $\BkOf{k}$ by setting $V(\BkOf{k}) \coloneqq \Set{v_1, \dots, v_k}$, $E(\BkOf{k}) \coloneqq \Set{e}$, and $\EdgeFunOf{\BkOf{k}}(e) \coloneqq \Set{v_1, \dots, v_k}$, where $v_1, \dots, v_k$ are fresh vertices and $e$ is a fresh edge, i.e., $\BkOf{k}$ is a connected Berge-acyclic hypergraph with a single edge that connects $k$ vertices.

    Observe that we have $\IHom(\BkOf{k}, G) = \sum_{i=1}^{n} i^k \cdot \AbsEGiOf{i}$ for every $k \ge 1$, which yields the system
    \begin{align*}
        \begin{pmatrix}
            1^1 & 2^1 & \hdots & n^1\\
            1^2 & 2^2 & \hdots & n^2\\
            \vdots & \vdots & \ddots & \vdots\\
            1^n & 2^n & \hdots & n^n \\
        \end{pmatrix}
        \cdot
        \begin{pmatrix}
            \AbsEGiOf{1}\\
            \AbsEGiOf{2}\\
            \vdots\\
            \AbsEGiOf{n}
        \end{pmatrix}
        =
        \begin{pmatrix}
            \IHom(\BkOf{1}, G)\\
            \IHom(\BkOf{2}, G)\\
            \vdots\\
            \IHom(\BkOf{n}, G)
        \end{pmatrix}
    \end{align*}
    of linear equations, where the matrix is invertible:
    the Vandermonde matrix $V(0, \dots, n)$ is invertible since the values $0, \dots, n$ are pairwise distinct, and if we delete the first row and column of $V(0, \dots, n)$, which does not affect its determinant by the Laplace expansion, we obtain the transpose of the matrix above.
    Hence, we get that the values $\AbsEGiOf{1}, \dots, \AbsEGiOf{n}$ can be obtained from $\IHOM{\BACHs}{G}$.
\end{proof}

\subsection{Homomorphisms from Berge-Acyclic Hypergraphs}
\label{sec:hy:BACHs}

With \Cref{th:hy:countingMBACHIncHoms}, it remains to show that counting incidence homomorphisms from $\BACHs$ is equivalent to counting homomorphisms from $\BACHs$.
To this end, we call an incidence homomorphism $(h_V, h_E)$ from a hypergraph $G$ to a hypergraph $H$ \textit{locally injective}, \textit{locally surjective}, and \textit{locally bijective} if, for every $e \in E(G)$, the restriction $\restr{h_V}{\EdgeFunOf{G}(e)} \colon \EdgeFunOf{G}(e) \to \EdgeFunOf{H}(h_E(e))$ of $h_V$ to the vertices of $e$ is injective, surjective, and bijective, respectively.
For a connected hypergraph $G$ and a hypergraph $H$, we denote the number of locally injective incidence homomorphisms by $\LocInjIHom(G, H)$ and, since an incidence homomorphism is locally surjective if and only if it is a homomorphism, the number of locally bijective incidence homomorphisms by $\LocInjHom(G, H)$.

The main work is spread across three lemmas:
Together, \Cref{le:hy:IHomLocInjIHom} and \Cref{le:hy:LocInjIHomLocInjHom} \enquote{balance} incidence homomorphisms to locally bijective incidence homomorphisms by first relating incidence homomorphisms to locally injective incidence homomorphisms and then, from there on, to locally bijective incidence homomorphisms.
Analogously to \Cref{le:hy:IHomLocInjIHom}, \Cref{le:hy:HomEInjHom} relates homomorphisms to locally injective homomorphisms or, in other words, locally bijective incidence homomorphisms.

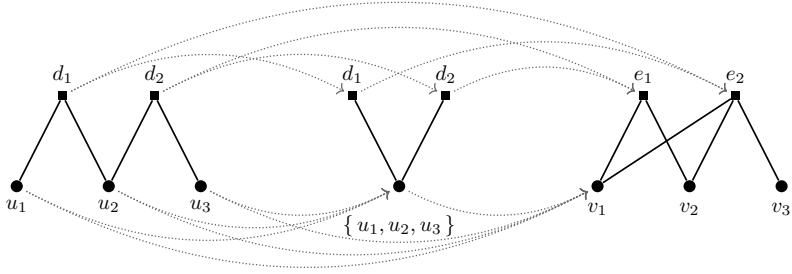
\begin{figure}[tbp]
	\centering
    \scalebox{0.8}{
	\begin{tikzpicture}[node distance = \VertexDistance cm]
		\node[hypervertex, label={270:$u_1$}] (t0) {};
		\node[hypervertex, label={270:$u_2$}, right = of t0] (t1) {};
		\node[hypervertex, label={270:$u_3$}, right = of t1] (t2) {};

		\node[hyperedge, label={90:$d_1$}] at ($(t0)!0.5!(t1) + (0,\EdgeDistance)$) (e0) {};
		\node[hyperedge, label={90:$d_2$}] at ($(t1)!0.5!(t2) + (0,\EdgeDistance)$) (e1) {};

		\draw[hyperedgeedge] (t0) to (e0);
		\draw[hyperedgeedge] (t1) to (e0);
		\draw[hyperedgeedge] (t1) to (e1);
		\draw[hyperedgeedge] (t2) to (e1);

		\node[right = 1.5cm of t2] (b0) {};
		\node[hypervertex, label={[label distance = 0.25cm]270:$\Set{u_1, u_2, u_3}$}, right = of b0] (b1) {};
		\node[right = of b1] (b2) {};
		\node[right = of b2] (b3) {};

		\node[hyperedge, label={90:$d_1$}] at ($(b0)!0.5!(b1) + (0,\EdgeDistance)$) (be0) {};
		\node[hyperedge, label={90:$d_2$}] at ($(b1)!0.5!(b2) + (0,\EdgeDistance)$) (be1) {};

		\draw[hyperedgeedge] (b1) to (be0);
		\draw[hyperedgeedge] (b1) to (be1);

		\node[hypervertex, label={[label distance=0.04cm]270:$v_1$}, right = 1.5cm of b2] (v0) {};
		\node[hypervertex, label={[label distance=0.04cm]270:$v_2$}, right = of v0] (v1) {};
		\node[hypervertex, label={[label distance=0.04cm]270:$v_3$}, right = of v1] (v2) {};

		\node[hyperedge, label={90:$e_1$}] at ($(v0)!0.5!(v1) + (0,\EdgeDistance)$) (ge0) {};
		\node[hyperedge, label={90:$e_2$}] at ($(v1)!0.5!(v2) + (0,\EdgeDistance)$) (ge1) {};

		\draw[hyperedgeedge] (v0) to (ge0);
		\draw[hyperedgeedge] (v0) to (ge1);
		\draw[hyperedgeedge] (v1) to (ge0);
		\draw[hyperedgeedge] (v1) to (ge1);
		\draw[hyperedgeedge] (v2) to (ge1);

		\draw[overlay, mappingHP, bend left = \TadLongerShorterBend] (e0) to (ge1);
		\draw[overlay, mappingHP, bend left = \TadLongerShorterBend] (e1) to (ge0);

		\draw[overlay, mappingHP, bend right = \TadLongerShorterBend] (t0) to (v0);
		\draw[overlay, mappingHP, bend right = \TadLongerShorterBend] (t1) to (v0);
		\draw[overlay, mappingHP, bend right = \TadLongerShorterBend] (t2) to (v0);

		\draw[mappingHP, bend left = \TadLongerShorterBend] (e0) to (be0);
		\draw[mappingHP, bend left = \TadLongerShorterBend] (e1) to (be1);

		\draw[mappingHP, bend right = \TadLongerShorterBend] (t0) to (b1);
		\draw[mappingHP, bend right = \TadLongerShorterBend] (t1) to (b1);
		\draw[mappingHP, bend right = \TadLongerShorterBend] (t2) to (b1);

		\draw[mappingHP, bend left = \TadLongerShorterBend] (be0) to (ge1);
		\draw[mappingHP, bend left = \TadLongerShorterBend] (be1) to (ge0);

		\draw[mappingHP, bend right = \TadLongerShorterBend] (b1) to (v0);
	\end{tikzpicture}
    }
	\caption{Decomposition of an incidence homomorphism into a locally merging homomorphism and a locally injective incidence homomorphism}
	\label{fig:hy:mergingExample}
\end{figure}

While our goal is to relate incidence homomorphisms to locally surjective incidence homomorphisms, we are forced to take the detour that is local injectivity due to the way we prove \Cref{le:hy:LocInjIHomLocInjHom}:
We fill up edges that are mapped non-surjectively by adding leaves, i.e., vertices that are part of exactly one edge.
Without this injectivity, which we achieve by merging vertices within an edge that are mapped to the same vertex, these added leaves may be mapped to the same vertex again causing us to overcount endlessly.
With local injectivity, also achieving local surjectivity is possible as a locally bijective incidence homomorphism has to map an edge to an edge of exactly the same size.
Thus, if we use leaves to fill up an edge to the size of the target edges, we do not overcount as we do not count incidence homomorphisms where adding fewer leaves would have sufficed.
Note that, in our setting, it is crucial that we only consider such a local form of injectivity;
we have to make sure the Berge-acyclicity is preserved when merging vertices.

To relate incidence homomorphisms to locally injective incidence homomorphisms, we define \textit{locally merging} homomorphisms, which only allow vertices to be mapped to the same vertex if they are part of the same edge.
To this end, we first define the relation $\MergeEquivSpaceOf{h_V} \subseteq V(G) \times V(G)$ for an incidence homomorphism $(h_V, h_E)$ between two hypergraphs $G$ and $H$ by letting $u \MergeEquivOf{h_V} v$ if there is a walk $v_0, e_1, \dots, v_k$ from $u$ to $v$ in $G$ with $h_V(v_{i-1}) = h_V(v_{i})$ for every $i \in \NumbersTo{k}$.
Clearly, $\MergeEquivOf{h_V}$ is an equivalence relation, and for all $u,v \in V(G)$, we have that $u \MergeEquivOf{h_V} v$ implies $h_V(u) = h_V(v)$.
We call a homomorphism $(h_V, h_E)$ between hypergraphs $G$ and $H$ locally merging if
\begin{enumerate}
    \item $h_V(u) = h_V(v)$ if and only if $u \MergeEquivOf{h_V} v$ for all $u,v \in V(G)$,
    \item $h_V$ is surjective, and \label{en:hy:merging:verticesSurjective}
    \item $h_E$ is bijective, \label{en:hy:merging:edgesBijective}
\end{enumerate}
and, for connected hypergraphs $G$ and $H$, we let $\EMergHom(G, H)$ be the number of such homomorphisms from $G$ to $H$.

By decomposing incidence homomorphisms into locally merging homomorphisms and locally injective incidence homomorphisms as in \Cref{fig:hy:mergingExample}, we obtain \Cref{le:hy:IHomLocInjIHom}.
The crucial argument is the fact that the intermediate hypergraph is uniquely determined by $(h_V, h_E)$, i.e., every decomposition of $(h_V, h_E)$ has to use the same intermediate hypergraph.
Note that, by merging vertices to obtain the intermediate hypergraph, parallel loops may be created even when decomposing an incidence homomorphism between simple hypergraphs.
Moreover, these parallel loops may have to be mapped to different edges, making it impossible to merge them into a single loop.
Since, for such a decomposition, automorphisms of the intermediate hypergraph can be used to obtain a different decomposition, we have to divide by the number of automorphisms.
Note that the identity of \Cref{le:hy:IHomLocInjIHom} is stated for arbitrary connected hypergraphs;
once it is needed, we restrict it to Berge-acyclic ones.

\newcommand{\IncV}{h_V}
\newcommand{\IncE}{h_E}
\newcommand{\Inc}{(\IncV, \IncE)}

\newcommand{\MerV}[1][]{{g_V^{#1}}}
\newcommand{\MerE}[1][]{{g_E^{#1}}}
\newcommand{\Mer}[1][]{(\MerV[#1], \MerE[#1])}
\newcommand{\MerVPrime}{{g'_V}}
\newcommand{\MerEPrime}{{g'_E}}
\newcommand{\MerPrime}{(\MerVPrime, \MerEPrime)}

\newcommand{\LocInjIncV}[1][]{{h_V^{#1}}}
\newcommand{\LocInjIncE}[1][]{{h_E^{#1}}}
\newcommand{\LocInjInc}[1][]{(\LocInjIncV[#1], \LocInjIncE[#1])}
\newcommand{\LocInjIncVPrime}{{h'_V}}
\newcommand{\LocInjIncEPrime}{{h'_E}}
\newcommand{\LocInjIncPrime}{(\LocInjIncVPrime, \LocInjIncEPrime)}

\newcommand{\MerLocInjInc}[1][]{(\Mer[#1], \LocInjInc[#1])}
\newcommand{\MerLocInjIncPrime}{(\MerPrime, \LocInjIncPrime)}
\newcommand{\MerLocInjIncFam}{(\MerLocInjInc[\pi])_{\pi \in \AutGroup(G')}}

\newcommand{\Merged}{G^{m}}
\newcommand{\IsoV}{{\sigma_{V}^m}}
\newcommand{\IsoE}{{\sigma_{E}^m}}
\newcommand{\Iso}{(\IsoV, \IsoE)}
\begin{lemma}
    \label{le:hy:IHomLocInjIHom}
    We have $\IHom = \EMergHom \cdot \Aut^{-1} \cdot \LocInjIHom$.
    The matrix $\EMergHom$ is invertible and lower triangular.
\end{lemma}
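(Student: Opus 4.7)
The plan is to show that every incidence homomorphism $\Inc$ from $G$ to $H$ factors, essentially uniquely, as a locally merging homomorphism $\Mer \colon G \to \Merged$ followed by a locally injective incidence homomorphism $\LocInjInc \colon \Merged \to H$, where $\Merged$ is uniquely determined up to isomorphism by $\Inc$. Concretely, I would take $\Merged$ to be the quotient of $G$ by $\MergeEquivOf{\IncV}$: its vertex set is $V(G)/{\MergeEquivOf{\IncV}}$, its edge set is $E(G)$ (so that parallel loops may arise when all vertices of an edge collapse), and $\EdgeFunOf{\Merged}(e) \coloneqq \Set{[v] \mid v \in \EdgeFunOf{G}(e)}$. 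The map $\Mer$ is the quotient map on vertices together with $\MerE = \mathrm{id}_{E(G)}$, which is readily checked to satisfy the three conditions in the definition of a locally merging homomorphism. The pair $\LocInjInc$ is defined by $\LocInjIncV([v]) \coloneqq \IncV(v)$, which is well-defined because $\MergeEquivOf{\IncV}$ refines the fibers of $\IncV$, and $\LocInjIncE \coloneqq \IncE$; local injectivity of $\LocInjInc$ along each edge follows from the defining property of $\MergeEquivOf{\IncV}$, since two vertices sharing an edge and mapped to the same vertex by $\IncV$ are already merge-equivalent via the walk of length two between them.

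Next, I would set up a bijection between pairs $(\Inc, \Iso)$, where $\Inc$ is an incidence homomorphism from $G$ to $H$ and $\Iso$ is an isomorphism from the concretely constructed $\Merged$ to the fixed representative $G'$ of its isomorphism type, and decompositions $\MerLocInjInc$ of $\Inc$ with intermediate hypergraph $G'$. In one direction, precomposing with $\Iso$ and postcomposing with $\Iso^{-1}$ produces such a decomposition. In the other direction, given $\MerLocInjInc$, the local injectivity of $\LocInjInc$ combined with the definition of $\Mer$ yields $\MergeEquivOf{\MerV} = \MergeEquivOf{\IncV}$, so the quotient structure of $G$ is intrinsic to $\Inc$ and the intermediate hypergraph is canonically identified with $\Merged$; the needed isomorphism $\Iso$ is read off from $\Mer$ and the quotient map. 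Summing over isomorphism types $G'$ and accounting for the $\Aut(G')$ choices of $\Iso$ per $\Inc$ yields
\begin{align*}
    \IHom(G,H) = \sum_{G'} \EMergHom(G,G') \cdot \Aut(G')^{-1} \cdot \LocInjIHom(G',H),
\end{align*}
which is the claimed matrix identity.

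For the structural properties of $\EMergHom$, I would use the size ordering on isomorphism types: if there is a locally merging homomorphism from $G$ to $G'$, then $\MerE$ is bijective, so $\abs{E(G')} = \abs{E(G)}$, and $\MerV$ is surjective, so $\abs{V(G')} \le \abs{V(G)}$. Hence $\EMergHom(G,G') \neq 0$ implies $G' \le G$ in the ordering, which gives lower triangularity. When the sizes coincide, $\MerV$ must be bijective, so that no merging happens and $\Mer$ is in fact an isomorphism; hence $\EMergHom(G,G) = \Aut(G) > 0$, and forward substitution, as noted in the preliminaries, gives that $\EMergHom$ is invertible.

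The main obstacle, I expect, is the uniqueness part of the decomposition: one must argue that $\MergeEquivOf{\IncV}$ is intrinsic to $\Inc$ and agrees with $\MergeEquivOf{\MerV}$ in any decomposition, so that two decompositions of the same $\Inc$ differ exactly by an automorphism of the common intermediate type. Parallel loops deserve particular care here, since collapsing the vertex sets of several distinct edges of $G$ to the same singleton can produce several parallel loops in $\Merged$ that $\LocInjIncE$ sends to distinct edges of $H$; this is precisely why $\MerE$ must be taken to be the identity on $E(G)$ rather than attempting to identify such loops, and it is the only point where allowing parallel edges in $\Merged$ is genuinely needed.
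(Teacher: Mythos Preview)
Your proposal is correct and follows essentially the same route as the paper: construct $\Merged$ as the quotient $G/{\MergeEquivOf{\IncV}}$ with $\MerE$ the identity on $E(G)$, verify that the quotient map is locally merging and the induced map locally injective, and then argue that any factorization through some $G'$ forces $\MergeEquivOf{\MerV} = \MergeEquivOf{\IncV}$ so that $G' \iso \Merged$ and the ambiguity is precisely $\Aut(G')$. Your packaging of the count as a bijection between pairs $(\Inc,\Iso)$ and factorizations through $G'$ is just a rephrasing of the paper's family $\MerLocInjIncFam$, and your lower-triangularity argument is identical to the paper's.
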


For the special case of homomorphisms, i.e., locally surjective incidence homomorphisms, the proof of \Cref{le:hy:IHomLocInjIHom} also directly yields \Cref{le:hy:HomEInjHom}.

\begin{lemma}
    \label{le:hy:HomEInjHom}
    We have $\Hom = \EMergHom \cdot \Aut^{-1} \cdot \LocInjHom$.
\end{lemma}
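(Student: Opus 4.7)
The plan is to follow the proof of \Cref{le:hy:IHomLocInjIHom} verbatim, restricting attention to locally surjective incidence homomorphisms on both sides. Recall that a homomorphism from $G$ to $H$ is by definition the same thing as a locally surjective incidence homomorphism, and that $\LocInjHom(G,H)$ counts locally bijective incidence homomorphisms, i.e., those that are both locally injective \emph{and} locally surjective. It therefore suffices to show that the bijection built in the proof of \Cref{le:hy:IHomLocInjIHom} restricts to a bijection between homomorphisms $G \to H$ on the left, and $\AutGroup(G^m)$-orbits of pairs consisting of a locally merging homomorphism $G \to G^m$ together with a locally \emph{bijective} incidence homomorphism $G^m \to H$ on the right.

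First I would check that whenever the input incidence homomorphism $\Inc$ is in addition locally surjective, the second factor $\LocInjInc \colon G^m \to H$ produced by the decomposition is locally bijective rather than merely locally injective. This is immediate from the construction: the merging factor $\Mer$ sends $\EdgeFunOf{G}(e)$ onto $\EdgeFunOf{G^m}(\MerE(e))$, so
\begin{align*}
    \LocInjIncV(\EdgeFunOf{G^m}(\MerE(e))) &= \IncV(\EdgeFunOf{G}(e)) = \EdgeFunOf{H}(\IncE(e)) = \EdgeFunOf{H}(\LocInjIncE(\MerE(e))),
\end{align*}
so the inclusion in the definition of an incidence homomorphism tightens to equality on $G^m$, which together with the local injectivity already established in \Cref{le:hy:IHomLocInjIHom} yields local bijectivity.

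Conversely, I would verify that the composition of a locally merging homomorphism $\Mer \colon G \to G^m$ with a locally bijective incidence homomorphism $\LocInjInc \colon G^m \to H$ is a genuine homomorphism, not just an incidence homomorphism. Since $\Mer$ maps $\EdgeFunOf{G}(e)$ onto $\EdgeFunOf{G^m}(\MerE(e))$ by surjectivity of $\MerV$ on each edge, and the restriction of $\LocInjIncV$ to $\EdgeFunOf{G^m}(\MerE(e))$ is a bijection onto $\EdgeFunOf{H}(\LocInjIncE(\MerE(e)))$, the composite vertex map is surjective from $\EdgeFunOf{G}(e)$ onto $\EdgeFunOf{H}(\LocInjIncE(\MerE(e)))$, so the incidence-homomorphism inclusion is in fact an equality.

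With these two checks in hand, the same $\AutGroup(G^m)$-orbit-counting argument as in \Cref{le:hy:IHomLocInjIHom} now counts homomorphisms in place of incidence homomorphisms and locally bijective incidence homomorphisms in place of locally injective ones, yielding the matrix identity $\Hom = \EMergHom \cdot \Aut^{-1} \cdot \LocInjHom$. I do not anticipate any real obstacle beyond this translation; the only content added to \Cref{le:hy:IHomLocInjIHom} is the observation that local surjectivity of the composite is equivalent to local bijectivity of the second factor, which the two displayed arguments above establish in both directions.
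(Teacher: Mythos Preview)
Your proposal is correct and matches the paper's approach exactly: the paper also derives \Cref{le:hy:HomEInjHom} as a special case of the proof of \Cref{le:hy:IHomLocInjIHom}, restricting to pairs whose second factor is in addition a homomorphism and noting (as flagged inside that proof) that the constructed $\LocInjInc[m]$ is a homomorphism whenever $\Inc$ is. Your two displayed checks make explicit precisely the forward and backward directions the paper invokes, so nothing further is needed.
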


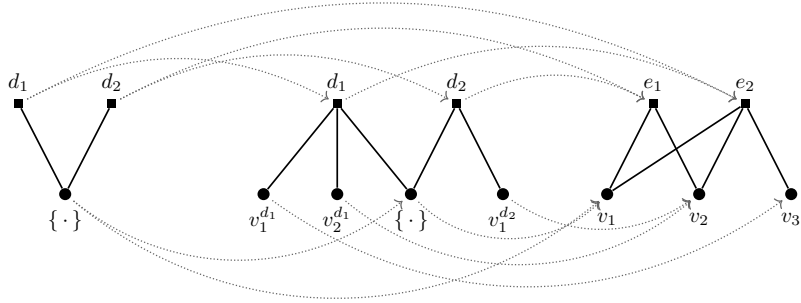
\begin{figure}[htbp]
	\centering
    \scalebox{0.8}{
	\begin{tikzpicture}[node distance = \VertexDistance cm]
		\node[overlay] (t0) {};
		\node[hypervertex, label={[label distance=0.03cm]270:$\Set{\cdot}$}, right = of t0] (t1) {};
		\node[right = of t1] (t2) {};

		\node[hyperedge, label={90:$d_1$}] at ($(t0)!0.5!(t1) + (0,\EdgeDistance)$) (e0) {};
		\node[hyperedge, label={90:$d_2$}] at ($(t1)!0.5!(t2) + (0,\EdgeDistance)$) (e1) {};

		\draw[hyperedgeedge] (t1) to (e0);
		\draw[hyperedgeedge] (t1) to (e1);

		\node[hypervertex, label={[label distance=-0.03cm]270:$v^{d_1}_1$}, right = 1.5cm of t2] (b0) {};
		\node[hypervertex, label={[label distance=0.03cm]270:$\Set{\cdot}$}, right = 1.7 * \VertexDistance of b0] (b1) {};
		\node[hypervertex, label={[label distance=-0.03cm]270:$v^{d_2}_1$}, right = of b1] (b2) {};
		\node[hypervertex, label={[label distance=-0.03cm]270:$v^{d_1}_2$}] (b3) at ($(b0)!0.5!(b1)$) {};

		\node[hyperedge, label={90:$d_1$}] at ($(b0)!0.5!(b1) + (0,\EdgeDistance)$) (be0) {};
		\node[hyperedge, label={90:$d_2$}] at ($(b1)!0.5!(b2) + (0,\EdgeDistance)$) (be1) {};

		\draw[hyperedgeedge] (b0) to (be0);
		\draw[hyperedgeedge] (b1) to (be0);
		\draw[hyperedgeedge] (b1) to (be1);
		\draw[hyperedgeedge] (b2) to (be1);
		\draw[hyperedgeedge] (b3) to (be0);

		\node[hypervertex, label={[label distance=0.08cm]270:$v_1$}, right = 1.5cm of b2] (v0) {};
		\node[hypervertex, label={[label distance=0.08cm]270:$v_2$}, right = of v0] (v1) {};
		\node[hypervertex, label={[label distance=0.08cm]270:$v_3$}, right = of v1] (v2) {};

		\node[hyperedge, label={90:$e_1$}] at ($(v0)!0.5!(v1) + (0,\EdgeDistance)$) (ge0) {};
		\node[hyperedge, label={90:$e_2$}] at ($(v1)!0.5!(v2) + (0,\EdgeDistance)$) (ge1) {};

		\draw[hyperedgeedge] (v0) to (ge0);
		\draw[hyperedgeedge] (v0) to (ge1);
		\draw[hyperedgeedge] (v1) to (ge0);
		\draw[hyperedgeedge] (v1) to (ge1);
		\draw[hyperedgeedge] (v2) to (ge1);

		\draw[overlay, mappingHP, bend left = \TadLongerShorterBend] (e0) to (ge1);
		\draw[overlay, mappingHP, bend left = \TadLongerShorterBend] (e1) to (ge0);

		\draw[overlay, mappingHP, bend right = \NotSoLongBend] (t1) to (v0);

		\draw[mappingHP, bend left = \TadLongerShorterBend] (e0) to (be0);
		\draw[mappingHP, bend left = \TadLongerShorterBend] (e1) to (be1);

		\draw[mappingHP, bend right = 39] (t1) to (b1);

		\draw[mappingHP, bend left = \TadLongerShorterBend] (be0) to (ge1);
		\draw[mappingHP, bend left = \TadLongerShorterBend] (be1) to (ge0);

		\draw[overlay, mappingHP, bend right = 36] (b0) to (v2);
		\draw[mappingHP, bend right = 46] (b1) to (v0);
		\draw[mappingHP, bend right = 35] (b2) to (v1);
		\draw[mappingHP, bend right = \NotSoLongBend] (b3) to (v1);
	\end{tikzpicture}
    }
	\caption{Decomposition of a locally injective incidence homomorphism into a leaf-adding incidence homomorphism and a locally injective homomorphism}
	\label{fig:hy:leafAddingExample}
\end{figure}

To prove \Cref{le:hy:LocInjIHomLocInjHom}, we define \textit{leaf-adding} incidence homomorphisms, which are embeddings of a hypergraph into another one that has no additional vertices or edges with the exception of leaves.
For this, we need the notion of a \textit{strong} incidence homomorphism between hypergraphs $G$ and $H$, which is an incidence homomorphism $(h_V, h_E)$ from $G$ to $H$ that additionally satisfies the inclusion $h_V^{-1}(\EdgeFunOf{H}(h_E(e)) \cap \ImageOf{h_V}) \subseteq \EdgeFunOf{G}(e)$ for every $e \in E(G)$;
it is actually not hard to see that this is equivalent to requiring that the corresponding homomorphism between the colored incidence graphs $\CIncGraph{G}$ and $\CIncGraph{H}$ is a strong homomorphism.
We call an incidence homomorphism $(h_V, h_E)$ between hypergraphs $G$ and $H$ leaf-adding if
\begin{enumerate}
    \item $(h_V, h_E)$ is a strong incidence homomorphism, \label{en:hy:leafadding:strong}
    \item $h_V$ is injective, \label{en:hy:leafadding:injective}
    \item $h_E$ is bijective, and \label{en:hy:leafadding:edgesBijective}
    \item the vertices $V(H) \setminus \ImageOf{h_V}$ are leaves of $H$, \label{en:hy:leafadding:notHitLeaf}
\end{enumerate}
\noindent and, for connected hypergraphs $G$ and $H$, we let $\LeafAddingI(G, H)$ be the number of leaf-adding incidence homomorphisms from $G$ to $H$.
Similarly to the proof of \Cref{le:hy:IHomLocInjIHom}, the proof of \Cref{le:hy:LocInjIHomLocInjHom} decomposes locally injective incidence homomorphisms into leaf-adding incidence homomorphisms and locally injective homomorphisms as in \Cref{fig:hy:leafAddingExample}.
Again, this identity is proven for arbitrary connected hypergraphs, and we restrict it to Berge-acyclic ones once it is needed.

\newcommand{\LAddV}[1][]{{g_V^{#1}}}
\newcommand{\LAddE}[1][]{{g_E^{#1}}}
\newcommand{\LAdd}[1][]{(\LAddV[#1], \LAddE[#1])}
\newcommand{\LAddVPrime}{{g'_V}}
\newcommand{\LAddEPrime}{{g'_E}}
\newcommand{\LAddPrime}{(\LAddVPrime, \LAddEPrime)}

\newcommand{\LocInjV}[1][]{{h_V^{#1}}}
\newcommand{\LocInjE}[1][]{{h_E^{#1}}}
\newcommand{\LocInj}[1][]{(\LocInjV[#1], \LocInjE[#1])}
\newcommand{\LocInjVPrime}{{h'_V}}
\newcommand{\LocInjEPrime}{{h'_E}}
\newcommand{\LocInjPrime}{(\LocInjVPrime, \LocInjEPrime)}

\newcommand{\LAddLocInj}[1][]{(\LAdd[#1], \LocInj[#1])}
\newcommand{\LAddLocInjPrime}{(\LAddPrime, \LocInjPrime)}
\newcommand{\LAddLocInjFam}{(\LAddLocInj[\pi])_{\pi \in \AutGroup(G')}}

\newcommand{\Balanced}{G^{a}}
\newcommand{\IsoBalV}{{\sigma_{V}^a}}
\newcommand{\IsoBalE}{{\sigma_{E}^a}}
\newcommand{\IsoBal}{(\IsoBalV, \IsoBalE)}
\begin{lemma}
    \label{le:hy:LocInjIHomLocInjHom}
    We have $\LocInjIHom = \LeafAddingI \cdot \Aut^{-1} \cdot \LocInjHom$.
    The matrix $\LeafAddingI$ is invertible and upper triangular.
\end{lemma}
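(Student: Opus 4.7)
The plan is to mirror the proof of \Cref{le:hy:IHomLocInjIHom} by decomposing every locally injective incidence homomorphism $(h_V, h_E)\colon G \to H$ into a leaf-adding incidence homomorphism into an essentially unique intermediate hypergraph $\Balanced$, followed by a locally bijective homomorphism from $\Balanced$ to $H$, as sketched in \Cref{fig:hy:leafAddingExample}. Concretely, for every $e\in E(G)$, local injectivity gives $|h_V(\EdgeFunOf{G}(e))|=|\EdgeFunOf{G}(e)|$ and $h_V(\EdgeFunOf{G}(e))\subseteq \EdgeFunOf{H}(h_E(e))$. I build $\Balanced$ from $G$ by adding, for each $e$ and each ``missing'' vertex $w\in \EdgeFunOf{H}(h_E(e))\setminus h_V(\EdgeFunOf{G}(e))$, one fresh leaf attached only to $e$. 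The inclusion-type map $\LAdd\colon G\to \Balanced$ then satisfies all four conditions in the definition of a leaf-adding incidence homomorphism, and $h_V$ together with the bijection sending each fresh leaf to its corresponding missing vertex yields a locally bijective homomorphism $\LocInj\colon \Balanced \to H$ whose composition with $\LAdd$ is $(h_V, h_E)$.

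For the second step I show that this decomposition is unique up to the automorphism group of $\Balanced$. Suppose $(h_V,h_E) = \LocInjPrime \circ \LAddPrime$ with $\LAddPrime\colon G\to G'$ leaf-adding and $\LocInjPrime\colon G'\to H$ a locally bijective homomorphism. Condition~\ref{en:hy:leafadding:edgesBijective} identifies $E(G')$ with $E(G)$, and conditions \ref{en:hy:leafadding:strong}, \ref{en:hy:leafadding:injective}, \ref{en:hy:leafadding:notHitLeaf} together with the local bijectivity of $\LocInjPrime$ pin down, for every $e\in E(G)$, the cardinality of $\EdgeFunOf{G'}(e)\setminus \ImageOf{\LAddVPrime}$ (namely $|\EdgeFunOf{H}(h_E(e))|-|\EdgeFunOf{G}(e)|$) and force each such vertex to be a leaf attached only to $e$. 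Hence $G'\cong \Balanced$ via an isomorphism that intertwines the two decompositions, and conversely any $\pi\in\AutGroup(\Balanced)$ produces a new decomposition $(\LAdd,\LocInj)\mapsto(\pi\circ \LAdd,\LocInj\circ\pi^{-1})$. Thus the fiber over each $(h_V,h_E)\in\LocInjIHom(G,H)$ has size exactly $|\AutGroup(\Balanced)|$, and summing over $(h_V, h_E)$ and grouping by the isomorphism class of the intermediate hypergraph gives the matrix identity $\LocInjIHom = \LeafAddingI \cdot \Aut^{-1} \cdot \LocInjHom$.

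For triangularity, observe that any leaf-adding $\LAdd\colon G\to G'$ has $\LAddE$ bijective and $\LAddV$ injective, so $|E(G)|=|E(G')|$ and $|V(G)|\le |V(G')|$; hence $\LeafAddingI(G,G')\ne 0$ implies $G\le G'$ in the chosen ordering by $|V|+|E|$, which means $\LeafAddingI$ is upper triangular. Equality in this ordering forces $|V(G)|=|V(G')|$, so no leaf is added and $\LAdd$ is an isomorphism; thus $\LeafAddingI(G,G)=|\AutGroup(G)|>0$, and by the standard forward-substitution argument for upper triangular infinite matrices with non-zero diagonal, $\LeafAddingI$ has a (left) inverse that is again upper triangular.

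The main obstacle is the rigidity argument in the second step: I need to verify carefully that the strongness condition~\ref{en:hy:leafadding:strong}, interacting with local bijectivity of $\LocInjPrime$, really does force the partition $V(G')\setminus \ImageOf{\LAddVPrime}=\bigcupdot_{e\in E(G)}(\EdgeFunOf{G'}(e)\setminus \EdgeFunOf{G}(e))$ of the added leaves to match, edge by edge, the ``missing'' sets prescribed by $(h_V,h_E)$. Once this is established, constructing the explicit bijection between decompositions and $\AutGroup(\Balanced)$ is routine, but without it one could in principle permute leaves across edges of equal size and overcount.
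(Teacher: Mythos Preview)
Your proposal is correct and mirrors the paper's proof: build $\Balanced$ by adding the missing leaves, factor $(h_V,h_E)$ as $\LocInj\circ\LAdd$, show any other factorization through some $G'$ yields $G'\iso\Balanced$, and then count the $\AutGroup(\Balanced)$-orbit. The one place where the paper invests more work than your word ``routine'' suggests is the \emph{freeness} of the $\AutGroup(\Balanced)$-action (distinct $\pi$ give distinct pairs): when $\pi$ and $\pi'$ agree on edges and on $\ImageOf{\LAddV}$ but differ on some added leaf inside a single edge $e$, the first components $\pi\circ\LAdd$ and $\pi'\circ\LAdd$ coincide, and it is precisely the local injectivity of $\LocInj$ on $\EdgeFunOf{\Balanced}(e)$ that forces $\LocInj\circ\pi^{-1}\neq\LocInj\circ\pi'^{-1}$---so this step, not only the rigidity of $G'$, genuinely uses the hypothesis.
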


We have all we need to prove that counting incidence homomorphisms from $\BACHs$ is equivalent to counting homomorphisms from $\BACHs$.
Combining \Cref{le:hy:IHomLocInjIHom} and \Cref{le:hy:LocInjIHomLocInjHom} yields $\IHom = \EMergHom \cdot \Aut^{-1} \cdot \LeafAddingI \cdot \Aut^{-1} \cdot \LocInjHom$,
and \Cref{le:hy:HomEInjHom} states that we have $\Hom = \EMergHom \cdot \Aut^{-1} \cdot \LocInjHom$.
Even with the invertibility of $\EMergHom$ and $\LeafAddingI$, the proof of \Cref{le:hy:incidencehomshoms} is not trivial as the inverse of the upper triangular matrix $\LeafAddingI$ is still an upper triangular matrix, and hence, left multiplication with it may be undefined.
This, however, can be avoided by considering finite submatrices as in \cite{Dell2018}.
This proof finishes our \enquote{reduction} and, hence, the proof of \Cref{th:hy:countingBACHHoms} as it follows immediately from \Cref{th:hy:countingMBACHIncHoms} and \Cref{le:hy:incidencehomshoms}.

\begin{lemma}
    \label{le:hy:incidencehomshoms}
    For all hypergraphs $G$ and $H$, the following are equivalent:
    \begin{enumerate}
        \item $\IHOM{\BACHs}{G} = \IHOM{\BACHs}{H}$. \label{le:hy:incidencehomshoms:IncHom}
        \item $\HOM{\BACHs}{G} = \HOM{\BACHs}{H}$. \label{le:hy:incidencehomshoms:Hom}
    \end{enumerate}
\end{lemma}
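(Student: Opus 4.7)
The plan is to combine Lemmas \ref{le:hy:IHomLocInjIHom}, \ref{le:hy:LocInjIHomLocInjHom}, and \ref{le:hy:HomEInjHom}, which together yield the two matrix identities
\[
\Hom \;=\; \EMergHom \cdot \Aut^{-1} \cdot \LocInjHom \quad\text{and}\quad \IHom \;=\; \EMergHom \cdot \Aut^{-1} \cdot \LeafAddingI \cdot \Aut^{-1} \cdot \LocInjHom.
\]
Using the invertibility of $\EMergHom$, the first identity gives $\Aut^{-1}\cdot\LocInjHom = \EMergHom^{-1}\cdot\Hom$; substituting into the second yields $\IHom = (\EMergHom\cdot\Aut^{-1}\cdot\LeafAddingI\cdot\EMergHom^{-1})\cdot\Hom$, and a symmetric derivation using the invertibility of $\LeafAddingI$ gives $\Hom = (\EMergHom\cdot\LeafAddingI^{-1}\cdot\Aut\cdot\EMergHom^{-1})\cdot\IHom$. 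Either equation implies the desired direction: for every $B \in \BACHs$, the entry $\IHom(B, G)$ (resp.\ $\Hom(B, G)$) becomes a fixed linear combination of entries $\Hom(B', G)$ (resp.\ $\IHom(B', G)$) over $B' \in \BACHs$, with coefficients independent of the evaluation column.

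The main obstacle, flagged just before the statement, is that $\EMergHom^{-1}$ and $\LeafAddingI^{-1}$ are themselves triangular, so the implicit products $\LeafAddingI\cdot\EMergHom^{-1}$ and $\LeafAddingI^{-1}\cdot\Aut\cdot\EMergHom^{-1}$ need not be defined as infinite matrix products. I resolve this by passing to finite submatrices, as in \cite{Dell2018}. Fix $B \in \BACHs$ together with target hypergraphs $G$ and $H$, and let $N$ be an upper bound on the sizes of edges appearing in $B$, $G$, or $H$. Let $\mathcal{F}$ be the (finite) set of isomorphism types of connected Berge-acyclic hypergraphs with exactly $|E(B)|$ edges, each of size at most $N$; note that $B \in \mathcal{F}$ and that connectedness together with the edge-size bound also bounds the vertex count.

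The key observation is that, when evaluating the identities above on the $B$-row and the $G$- or $H$-column, every intermediate hypergraph appearing in the sums already lies in $\mathcal{F}$. Indeed, locally merging homomorphisms and leaf-adding incidence homomorphisms both preserve the number of edges; the merging equivalence classes of a Berge-acyclic hypergraph collapse connected subgraphs of the incidence graph, so merging preserves Berge-acyclicity, while adding leaves trivially does; and $\LocInjHom(B_2, G)$ as well as $\LocInjHom(B_2, H)$ both vanish whenever $B_2$ has an edge exceeding size $N$, by local injectivity. Hence truncating all involved matrices to the finite index set $\mathcal{F}\cup\{G,H\}$ preserves every entry used in the computations.

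With all products then finite, associativity is recovered and the algebraic manipulation above is justified, yielding the finite identities
\[
\IHom(B, G) \;=\; \sum_{B' \in \mathcal{F}} c_{B,B'}\, \Hom(B', G) \quad\text{and}\quad \Hom(B, G) \;=\; \sum_{B' \in \mathcal{F}} c'_{B,B'}\, \IHom(B', G),
\]
with coefficients depending only on $B$ and $N$; the same identities hold with $H$ in place of $G$ and the same coefficients. Since $\mathcal{F}\subseteq\BACHs$ and $B \in \BACHs$ was arbitrary, both implications of the equivalence follow. I expect the main technical step to be the careful verification that merging and leaf-adding preserve Berge-acyclicity, so that all intermediate hypergraphs genuinely lie in $\mathcal{F}$; once that is in place, the rest reduces to finite-dimensional linear algebra.
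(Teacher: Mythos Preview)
Your proposal is correct and follows essentially the same route as the paper: both combine Lemmas~\ref{le:hy:IHomLocInjIHom}, \ref{le:hy:HomEInjHom}, and \ref{le:hy:LocInjIHomLocInjHom}, restrict the identities to $\BACHs$ (using that locally merging and leaf-adding maps preserve Berge-acyclicity), and pass to finite submatrices to avoid the undefined infinite products caused by the upper-triangular factor $\LeafAddingI$. The only organizational difference is that the paper chains three separate equivalences through $\LocInjIHom$ and $\LocInjHom$ and applies the finite-submatrix trick only to the middle one, whereas you collapse everything into a single combined identity; the paper also takes $n=\abs{V(G)}$ (recoverable from the vector itself as $\LocInjIHom(K_1,G)$) as its edge-size bound rather than your $N$ depending on $B$, $G$, and $H$, but either choice works.
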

\begin{proof}
    We prove that $\LOCINJIHOM{\BACHs}{G} = \LOCINJIHOM{\BACHs}{H}$ holds if and only if $\LOCINJHOM{\BACHs}{G} = \LOCINJHOM{\BACHs}{H}$;
    the remaining two equivalences between counting (incidence) homomorphisms and counting locally injective (incidence) homomorphisms are easier to prove and follow in a similar fashion from \Cref{le:hy:IHomLocInjIHom} and \Cref{le:hy:HomEInjHom}, respectively.

    We show that, for every hypergraph~$G$, the entries of $\LOCINJIHOM{\BACHs}{G}$ are determined by the entries of $\LOCINJHOM{\BACHs}{G}$ and vice versa.
    For a connected Berge-acyclic hypergraph $B$ and a connected hypergraph $G$, having $\LeafAddingI(B, G) > 0$ implies that $G$ is also Berge-acyclic since $G$ results from $B$ by adding leaves
    Hence, we are able to restrict the class of all connected hypergraphs in the identity of \Cref{le:hy:LocInjIHomLocInjHom} to the class $\BACHs$, which gives us $\LOCINJIHOM{\BACHs}{G} = \restr{\LeafAddingI}{\BACHs \times \BACHs} \cdot \restr{\Aut}{\BACHs \times \BACHs}^{-1} \cdot \LOCINJHOM{\BACHs}{G}$;
    this already yields the backward direction.

    $\restr{\EMergHom}{\BACHs \times \BACHs}$ is still an invertible upper triangular matrix, and so is its inverse $\restr{\EMergHom}{\BACHs \times \BACHs}^{-1}$, which means that left multiplication with it may not be defined.
    However, we are able to circumvent this by considering finite submatrices.
    \newcommand{\BACHsmn}{\mathcal{B\!A}^{m}_{n}}
    To this end, for $m \ge 0$, let $\BACHsmn \subseteq \BACHs$ be the class of all connected Berge-acyclic hypergraphs with exactly $m$ edges and at most $n \coloneqq \abs{V(G)} = \LocInjIHom(\KnOf{1}, G)$ vertices in each edge, which clearly is finite.
    For every $B \in \BACHsmn$, the sum $\sum_{B' \in \BACHs} \LeafAddingI(B,B') \cdot \Aut^{-1}(B') \cdot \LocInjHom(B', G)$ can be restricted to hypergraphs $B' \in \BACHsmn$ since $\LeafAddingI(B,B') > 0$ implies that $B'$ has exactly as many edges as $B$, i.e., $m$ edges, and $\LocInjHom(B', G) > 0$ implies that every edge of $B'$ contains at most $n$ vertices.
    We obtain that $\LOCINJIHOM{\BACHsmn}{G}$ is equal to
    \begin{equation*}
        \restr{\LeafAddingI}{\BACHsmn \times \BACHsmn} \cdot \restr{\Aut^{-1}}{\BACHsmn \times \BACHsmn} \cdot \LOCINJHOM{\BACHsmn}{G}.
    \end{equation*}

    As these matrices and vectors are finite and both $\restr{\LeafAddingI}{\BACHsmn \times \BACHsmn}$ and $\restr{\Aut}{\BACHsmn \times \BACHsmn}^{-1}$ are still an invertible upper triangular matrix and an invertible diagonal matrix, respectively, we also get the forward direction:
    If a connected Berge-acyclic hypergraph $B$ has more than $n$ vertices in an edge, we trivially have $\LocInjHom(B,G) = 0$.
    Otherwise, we obtain the entry $\LocInjHom(B,G)$ by considering the class $\BACHsmn$ for the number of edges $m$ of $B$.
\end{proof}

\subsection{Simple Hypergraphs}
\label{sec:simplehypergraphs}
For a restriction of \Cref{th:hy:countingBACHHoms} to simple hypergraphs, consider a homomorphism $(h_V, h_E)$ from a hypergraph $G$ to a simple hypergraph $H$.
If $e, e' \in E(G)$ are parallel edges of $G$, i.e., $\EdgeFunOf{G}(e) = \EdgeFunOf{G}(e')$, then we have $\EdgeFunOf{H}(h_E(e)) = h_V(\EdgeFunOf{G}(e)) = h_V(\EdgeFunOf{G}(e')) = \EdgeFunOf{H}(h_E(e'))$, which implies $h_E(e) = h_E(e')$ since $H$ does not have parallel edges.
That is, parallel edges of $G$ have to be mapped to the same edge of $H$ since a homomorphism's mapping on edges is determined by its mapping on vertices up to parallel edges.
Hence, if we consider the simple hypergraph $G'$ obtained by merging parallel edges of $G$, then there is a one-to-one correspondence between the homomorphisms from $G$ to $H$ and these from $G'$ to $H$, and in particular, we have $\Hom(G, H) = \Hom(G', H)$.
Thus, for a simple hypergraph, it suffices to count homomorphisms from simple hypergraphs, and we obtain \Cref{th:hy:countingSimpleBACHHoms}, where $\SBACHs$ denotes the class of all connected Berge-acyclic simple hypergraphs.

\begin{corollary}
    \label{th:hy:countingSimpleBACHHoms}
    For all simple hypergraphs $G$ and $H$, the following are equivalent:
    \begin{enumerate}
        \item $\HOM{\SBACHs}{G} = \HOM{\SBACHs}{H}$.
        \item Color refinement does not distinguish $G$ and $H$.
    \end{enumerate}
\end{corollary}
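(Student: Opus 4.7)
The plan is to derive the corollary directly from Theorem \ref{th:hy:countingBACHHoms} by showing that, when the target is a simple hypergraph, counting homomorphisms from $\SBACHs$ already carries the same information as counting from $\BACHs$. Formally, I will prove that for simple hypergraphs $G$ and $H$, $\HOM{\SBACHs}{G} = \HOM{\SBACHs}{H}$ is equivalent to $\HOM{\BACHs}{G} = \HOM{\BACHs}{H}$, and then invoke the theorem.

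The forward implication $\HOM{\BACHs}{G} = \HOM{\BACHs}{H} \Rightarrow \HOM{\SBACHs}{G} = \HOM{\SBACHs}{H}$ is immediate from $\SBACHs \subseteq \BACHs$. For the converse, I would take an arbitrary $B \in \BACHs$ and let $B'$ be the simple hypergraph obtained from $B$ by merging parallel edges. By the observation stated immediately before the corollary, for any simple hypergraph $X$ one has $\Hom(B, X) = \Hom(B', X)$, so applying this to $X = G$ and $X = H$ reduces $\Hom(B, G) = \Hom(B, H)$ to $\Hom(B', G) = \Hom(B', H)$. The latter is given, provided $B' \in \SBACHs$.

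The one point that needs justification — and the closest thing to an obstacle here — is checking that $B' \in \SBACHs$, i.e.\ that $B'$ is still connected and Berge-acyclic. For this I would exploit the fact that parallel edges in a Berge-acyclic hypergraph $B$ can only be parallel loops: if two distinct edges $e_1, e_2$ shared a common vertex set of size $\ge 2$, then the vertices $e_1, e_2$ together with two shared incident vertices would form a $4$-cycle in $\IncGraph{B}$, contradicting Berge-acyclicity. Parallel loops are leaves of the incidence graph attached to a common vertex, so merging them just identifies some leaves; the incidence graph remains a tree, and connectivity is preserved. Hence $B' \in \SBACHs$.

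Combining these observations, $\HOM{\SBACHs}{G} = \HOM{\SBACHs}{H}$ implies $\Hom(B, G) = \Hom(B, H)$ for every $B \in \BACHs$, giving $\HOM{\BACHs}{G} = \HOM{\BACHs}{H}$, which by Theorem \ref{th:hy:countingBACHHoms} is equivalent to color refinement not distinguishing $G$ and $H$. Together with the trivial forward direction, this establishes the corollary.
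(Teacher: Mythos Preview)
Your proposal is correct and follows essentially the same route as the paper: the corollary is derived from Theorem~\ref{th:hy:countingBACHHoms} via the observation, spelled out in the paragraph preceding the corollary, that $\Hom(B,X)=\Hom(B',X)$ for simple $X$ when $B'$ is obtained from $B$ by merging parallel edges. Your additional justification that $B'\in\SBACHs$ (parallel edges in a Berge-acyclic hypergraph are necessarily parallel loops, so merging them keeps the incidence graph a tree) makes explicit a point the paper leaves implicit.
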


For incidence homomorphisms, however, the situation is not as clear as these may map parallel edges to non-parallel ones.
However, with an interpolation argument, it is possible to prove that such a restriction can be made.

\begin{lemma}
    \label{le:hy:bachsmbachs}
    For all simple hypergraphs $G$ and $H$, the following are equivalent:
    \begin{enumerate}
        \item $\IHOM{\SBACHs}{G} = \IHOM{\SBACHs}{H}$.
        \item Color refinement does not distinguish $G$ and $H$.
    \end{enumerate}
\end{lemma}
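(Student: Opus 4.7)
The backward direction is immediate: if color refinement does not distinguish $G$ and $H$, then \Cref{th:hy:countingMBACHIncHoms} yields $\IHOM{\BACHs}{G} = \IHOM{\BACHs}{H}$, which restricts to $\IHOM{\SBACHs}{G} = \IHOM{\SBACHs}{H}$ since $\SBACHs \subseteq \BACHs$.

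For the forward direction, by \Cref{th:hy:countingMBACHIncHoms} it suffices to show that, for simple hypergraphs $G$ and $H$, $\IHOM{\SBACHs}{G} = \IHOM{\SBACHs}{H}$ implies $\IHom(B, G) = \IHom(B, H)$ for every $B \in \BACHs$. The key structural observation is that in a connected Berge-acyclic hypergraph any two distinct edges share at most one vertex (else the incidence graph contains a $4$-cycle), so any two parallel edges must both be loops on a common vertex. Thus $B \in \BACHs \setminus \SBACHs$ iff some vertex of $B$ carries at least two parallel loops, and the plan is induction on the total excess $\sum_v \max\{\ell_v(B) - 1, 0\}$, where $\ell_v(B)$ denotes the number of loops of $B$ at $v$.

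For the inductive step, fix a vertex $v$ of $B$ with $k \ge 2$ parallel loops, let $B_0$ be $B$ with those $k$ loops deleted, and observe that $B_0$ has strictly smaller excess. A direct decomposition yields
\begin{align*}
    \IHom(B, G) = \sum_{w \in V(G)} N_w \cdot d_G(w)^k,
\end{align*}
where $N_w$ counts incidence homomorphisms $B_0 \to G$ sending $v$ to $w$ and $d_G(w)$ is the number of edges of $G$ through $w$. For each $k' \ge 1$ and each tuple $\mathbf{m} = (m_1, \dots, m_{k'}) \in \mathbb{Z}_{\ge 1}^{k'}$ with at most one entry equal to~$1$, let $B_0^{\mathbf{m}}$ be $B_0$ with $k'$ additional edges at $v$ of sizes $m_1, \dots, m_{k'}$, each using its own fresh leaves. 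The fresh leaves prevent new parallel edges at $v$, so $B_0^{\mathbf{m}}$ has the same excess as $B_0$, and the outer induction hypothesis supplies $\IHom(B_0^{\mathbf{m}}, G) = \IHom(B_0^{\mathbf{m}}, H)$. Writing $d_j(w)$ for the number of size-$j$ edges of $G$ through $w$, a direct count shows
\begin{align*}
    \IHom(B_0^{\mathbf{m}}, G) = \sum_{\mathbf{j}} M^{(k')}_{\mathbf{j}}(G) \prod_{i=1}^{k'} j_i^{m_i - 1}, \qquad M^{(k')}_{\mathbf{j}}(G) := \sum_{w \in V(G)} N_w \prod_{i=1}^{k'} d_{j_i}(w),
\end{align*}
and analogously for $H$. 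The goal reduces to showing $M^{(k)}_{\mathbf{j}}(G) = M^{(k)}_{\mathbf{j}}(H)$ for every~$\mathbf{j}$, since summing then produces $\IHom(B, G) = \IHom(B, H)$.

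I would proceed via an inner induction on $k'$. For $k' = 1$, the standard Vandermonde matrix over $m \ge 1$ inverts and gives $M^{(1)}_{j}(G) = M^{(1)}_{j}(H)$ for every~$j$. For $k' \ge 2$, the main obstacle is that restricting $\mathbf{m}$ to tuples with at most one~$1$ leaves a genuine gap in the Vandermonde data: indices $\mathbf{j}$ containing two or more~$1$'s are not reachable by interpolation alone. This is the one point where simplicity of $G$ and $H$ is essential: a simple hypergraph has at most one loop at each vertex, so $d_1(w) \in \{0, 1\}$ and hence $d_1(w)^{a} = d_1(w)$ for every $a \ge 1$. Under this idempotence, every $M^{(k')}_{\mathbf{j}}(G)$ whose index has $a \ge 1$ coordinates equal to~$1$ equals $M^{(k'-a+1)}_{\mathbf{j}'}(G)$, where $\mathbf{j}'$ consists of a single~$1$ followed by the non-$1$ entries of~$\mathbf{j}$; by the inner induction hypothesis this matches its $H$-analogue. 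The residual case is $\mathbf{j} \in \mathbb{Z}_{\ge 2}^{k'}$, and the tensor Vandermonde system restricted to $\mathbf{m} \in \mathbb{Z}_{\ge 2}^{k'}$ is invertible (each factor being a Vandermonde on $\{2, \dots, N\}$ with shifted powers, for any size bound $N$), so $M^{(k')}_{\mathbf{j}}(G) = M^{(k')}_{\mathbf{j}}(H)$ also holds there. Summing over~$\mathbf{j}$ closes both inductions, and \Cref{th:hy:countingMBACHIncHoms} finishes the proof.
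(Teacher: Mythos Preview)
Your overall strategy---reducing via \Cref{th:hy:countingMBACHIncHoms} to $\IHom(B,G)=\IHom(B,H)$ for all $B\in\BACHs$, observing that parallel edges in a Berge-acyclic hypergraph must be parallel loops, and interpolating by attaching fresh-leaf edges at a vertex carrying multiple loops---is correct and parallels the paper. There is, however, a gap in your inner induction. When $\mathbf j$ has $a\ge 1$ entries equal to $1$, the idempotence $d_1(w)^a=d_1(w)$ collapses $M^{(k')}_{\mathbf j}$ to $M^{(k'-a+1)}_{\mathbf j'}$, but for $a=1$ this gives $k'-a+1=k'$ and hence no reduction, so the inner hypothesis is unavailable. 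The tuples $\mathbf j$ with \emph{exactly one} entry equal to $1$ are therefore still unaccounted for when you pass to your ``residual'' tensor Vandermonde on $\{2,\dots,N\}^{k'}$; their contribution has not been removed from the system you are trying to invert.

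The repair is easy and in fact renders both the inner induction and the appeal to simplicity unnecessary. The single-factor matrix $\bigl(j^{\,m-1}\bigr)_{m\in\{2,\dots,N+1\},\,j\in[N]}$ is already invertible (its $j$th column is $j\cdot(1,j,\dots,j^{N-1})^{T}$, so it equals a standard Vandermonde times $\operatorname{diag}(1,\dots,N)$), hence its $k$-fold tensor is invertible on \emph{all} of $[N]^{k}$, not just on $\{2,\dots,N\}^{k}$. Since every $B_0^{\mathbf m}$ with $\mathbf m\in\mathbb Z_{\ge 2}^{k}$ has strictly smaller excess than $B$, the outer hypothesis alone already forces $M^{(k)}_{\mathbf j}(G)=M^{(k)}_{\mathbf j}(H)$ for every $\mathbf j\in[N]^{k}$, and summing gives $\IHom(B,G)=\IHom(B,H)$. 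For comparison, the paper keeps one loop at the chosen vertex, attaches $r$ new edges all of the \emph{same} size $s+1$, and runs a Vandermonde in $r$ whose nodes $\sum_i d_i\,i^{s}$ are made pairwise distinct by taking $s$ large; simplicity is used there only to bound the finite set of possible degree sequences by a function of $|V(G)|$.
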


\section{Directed Graphs}
\label{sec:di}

To prove that counting homomorphisms from DAGs suffices to characterize arbitrary directed graphs up to isomorphism, one could proceed in a similar fashion to \cite{Dell2018}, i.e., by defining an unfolding of a directed graph into a DAG and then proving the equivalence of counting homomorphisms and unfolding numbers.
This way, one obtains a characterization that is more intuitive than that of homomorphism counts, and one could show that an isomorphism between the directed graphs can be extracted from an isomorphism between appropriate unfoldings.
However, as the class of DAGs can also be defined in terms of homomorphism numbers, a proof using the algebraic properties of homomorphism counts turns out to be much simpler.

Lovász's second homomorphism-related work \cite{Lovasz1971} concerns the cancellation law among finite relational structures.
For the case of graphs, this asks whether a graph $K$ cancels out from the tensor products $G \TensorProduct K \iso H \TensorProduct K$, i.e., whether it satisfies the implication $G \TensorProduct K \iso H \TensorProduct K \implies G \iso H$ for all graphs $G$ and $H$.
Lovász gives the answer that this implication holds if and only if $K$ is not bipartite.
Moreover, from his work on the general case of finite relational structures, it follows that the transitive tournament $\TournamentOf{n}$ on $n$ vertices satisfies the cancellation law for directed graphs as long as $n \ge 3$.

To see how the cancellation law is related to homomorphism counts, observe that the class of DAGs can be defined as the class of all directed graphs that have a homomorphism into a transitive tournament.
Formally, if we let $\DAGs$ denote the class of DAGs and define $\DAGs_n \coloneqq \Set{G \mid \Hom(G, \TournamentOf{n}) > 0}$ for every $n \in \N$, then we have $\DAGs = \cup_{n \in \N} \DAGs_n$.
Then, using the facts that two directed graphs $G$ and $H$ are isomorphic if and only if we have $\Hom(F, G) = \Hom(F, H)$ for every directed graph $F$ \cite{Lovasz1967} and that $\Hom(F, G \TensorProduct H) = \Hom(F, G) \cdot \Hom(F, H)$ holds for all directed graphs $F$, $G$, and $H$ \cite{Lovasz1971}, we get that
\begin{align*}
    &G \TensorProduct \TournamentOf{n} \iso H \TensorProduct \TournamentOf{n}\\
    \iff &\forall F \ldotp \, \Hom(F, G \TensorProduct \TournamentOf{n}) = \Hom(F, H \TensorProduct \TournamentOf{n})\\
    \iff &\forall F \ldotp \, \Hom(F, G) \cdot \Hom(F, \TournamentOf{n}) = \Hom(F, H) \cdot \Hom(F, \TournamentOf{n})\\
    \iff &\HOM{\DAGs_n}{G} = \HOM{\DAGs_n}{H}
\end{align*}
holds for all directed graphs $G$ and $H$ and every $n \in \N$.
That is, tensor products with $\TournamentOf{n}$ are directly related to counting homomorphisms from $\DAGs_n$.

With the work of Lovász \cite{Lovasz1971}, this yields that two directed graphs $G$ and $H$ are isomorphic if and only if, for every DAG $D$, we have $\Hom(D, G) = \Hom(D, H)$.
More precisely, we obtain the even stronger statement that it suffices to count homomorphisms from the DAGs in $\DAGs_3$, i.e., from DAGs where the longest directed walk has length two.
For the case of undirected graphs, an analogous argument with the complete graph on three vertices $\KnOf{3}$, which is not bipartite, yields that arbitrary graphs can be characterized up to isomorphism by counting homomorphisms from all three-colorable graphs.

\section{Conclusion}

We have proven a generalization of the Tree Theorem for hypergraphs.
To this end, we have introduced a generalization of the color refinement algorithm for hypergraphs, which has lead to the notion of an incidence homomorphism.
By showing how incidence homomorphisms are related to homomorphisms, we have \enquote{reduced} the case of hypergraphs to the case of vertex-colored graphs.
For the case of directed graphs, we have revisited a result of Lovász, which shows that the class of DAGs is already too expressive to obtain an analogue of the Tree Theorem.

The central open question posed by our generalization of the Tree Theorem is whether it can further be generalized;
the Tree Theorem can be generalized to the $k$-dimensional Weisfeiler-Leman algorithm and graphs of treewidth at most $k$~\cite{Dell2018}.
An obvious attempt would be to consider the $k$-dimensional Weisfeiler-Leman algorithm on the colored incidence graphs of hypergraphs, in which case, however, our reduction does not generalize as we cannot restrict the identities of \Cref{le:hy:IHomLocInjIHom} and \Cref{le:hy:HomEInjHom} to hypergraphs whose incidence graphs have treewidth at most $k$;
merging vertices of a graph may increase its treewidth even when the merged vertices are part of the same neighborhood.
A way of interpreting this is that the treewidth of the incidence graph of a hypergraph $G$ is not a meaningful notion for $G$ since it mixes up the vertices and edges of $G$.

\bibliographystyle{plain}
\bibliography{bibliography}

\begin{thebibliography}{1}

\bibitem{Conte2004}
Donatello Conte, Pasquale Foggia, Carlo Sansone, and Mario Vento.
\newblock {Thirty years of graph matching in pattern recognition}.
\newblock {\em {International Journal of Pattern Recognition and Artificial
  Intelligence}}, 18(3):265--298, 2004.

\bibitem{Curticapean2017}
Radu Curticapean, Holger Dell, and D\'{a}niel Marx.
\newblock Homomorphisms are a good basis for counting small subgraphs.
\newblock In {\em Proceedings of the 49th Annual ACM SIGACT Symposium on Theory
  of Computing}, STOC 2017, pages 210--223. ACM, 2017.

\bibitem{Dell2018}
Holger Dell, Martin Grohe, and Gaurav Rattan.
\newblock {Lov{\'a}sz meets Weisfeiler and Leman}.
\newblock In Ioannis Chatzigiannakis, Christos Kaklamanis, D{\'a}niel Marx, and
  Donald Sannella, editors, {\em 45th International Colloquium on Automata,
  Languages, and Programming (ICALP 2018)}, volume 107 of {\em Leibniz
  International Proceedings in Informatics (LIPIcs)}, pages 40:1--40:14.
  Schloss Dagstuhl--Leibniz-Zentrum fuer Informatik, 2018.

\bibitem{Grohe2017}
Martin Grohe, Kristian Kersting, Martin Mladenov, and Pascal Schweitzer.
\newblock Color refinement and its applications.
\newblock In {\em An Introduction to Lifted Probabilistic Inference}. Cambridge
  University Press.
\newblock To appear.

\bibitem{Lovasz1967}
L.~Lov{\'a}sz.
\newblock Operations with structures.
\newblock {\em Acta Mathematica Hungarica}, 18(3-4):321--328, 1967.

\bibitem{Lovasz1971}
L.~Lov{\'a}sz.
\newblock On the cancellation law among finite relational structures.
\newblock {\em Periodica Mathematica Hungarica}, 1(2):145--156, 1971.

\bibitem{Lovasz2012}
László Lovász.
\newblock {\em Large Networks and Graph Limits.}
\newblock American Mathematical Society, 2012.

\bibitem{Vishwanathan2010}
S.~V.~N. Vishwanathan, Nicol~N. Schraudolph, Risi Kondor, and Karsten~M.
  Borgwardt.
\newblock Graph kernels.
\newblock {\em Journal of Machine Learning Research}, 11:1201--1242, 2010.

\end{thebibliography}

\clearpage

\appendix

\section{Missing Proofs}

\begin{proof}[\Cref{le:hy:IHomLocInjIHom}]
    Let $G$ be a connected hypergraph and $H$ be a hypergraph.
    We have to prove that
    \begin{align}
        \IHom(G,H) = \sum_{G'} \EMergHom(G,G') \cdot \Aut^{-1}(G') \cdot \LocInjIHom(G', H) \tag{$\ast$}\label{le:hy:IHomLocInjIHom:equation}
    \end{align}
    holds, where the sum ranges over all connected hypergraphs $G'$.
    Note that a hypergraph $G'$ with $\EMergHom(G, G') > 0$ has at most $\abs{V(G)}$ vertices and exactly $\abs{E(G)}$ edges.
    There are only finitely many such hypergraphs $G'$, which means that the sum is finite and, thus, well-defined.
    Now, we consider pairs $\MerLocInjInc$ where
    \begin{itemize}
        \item $\Mer$ is a locally merging hom.\ from $G$ to a connected hypergraph $G'$, and
        \item $\LocInjInc$ is a locally injective incidence homomorphism from $G'$ to $H$.
    \end{itemize}
    For such a pair, we call $G'$ its type and say that it \textit{defines} the incidence homomorphism $(\LocInjIncV \circ \MerV, \LocInjIncE \circ \MerE)$ from $G$ to $H$.
    Observe that $(\LocInjIncV \circ \MerV, \LocInjIncE \circ \MerE)$ really is an incidence homomorphism as the composition of a homomorphism and an incidence homomorphism.
    To prove (\ref{le:hy:IHomLocInjIHom:equation}), we devise a mapping of incidence homomorphisms $\Inc$ from $G$ to $H$ to families $\MerLocInjIncFam$ of pairwise distinct pairs of type $G'$ that all define $\Inc$.
    We prove that every pair defining $\Inc$ is already part of this family.
    By doing so, we partition the pairs of type $G'$ into families of size $\Aut(G')$, and from those, we have a one-to-one correspondence to incidence homomorphisms from $G$ to $H$, which proves (\ref{le:hy:IHomLocInjIHom:equation}).

    To define the mapping $\Inc \mapsto \MerLocInjIncFam$, we first construct a connected hypergraph $\Merged$ and define $G'$ to be its isomorphism type.
    Intuitively, we obtain $\Merged$ by merging vertices that are part of the same edge and mapped to the same vertex by $\Inc$.
    To this end, we define $\Merged \coloneqq \QuotientGraph{G}{\MergeEquivOf{\IncV}}$ to be the quotient hypergraph of $G$ w.r.t.\ the equivalence relation $\MergeEquivOf{\IncV}$, i.e., we have
    \begin{align*}
        &V(\Merged) = \MergeEquivClassesOf{V(G)}{\IncV},& &E(\Merged) = E(G), \text{ and}& &\EdgeFunOf{\Merged}(e) = \Set{\EqClassOf{v} \mid v \in \EdgeFunOf{G}(e)}&
    \end{align*}
    for every $e \in E(\Merged)$, where $\EqClassOf{v}$ denotes the equivalence class of $v$ w.r.t.\ $\MergeEquivOf{\IncV}$ for every $v \in V(G)$.
    We let $G'$ be the isomorphism type of $\Merged$ and fix an arbitrary isomorphism $\Iso$ from $\Merged$ to $G'$.

    We define
    \begin{align*}
        &\MerV[m](v) \coloneqq \EqClassOf{v}& &\text{and}& &\MerE[m](e) \coloneqq e
    \end{align*}
    for every $v \in V(G)$ and every $e \in E(G)$, respectively, and claim that $\Mer[m]$ is a locally merging homomorphism from $G$ to $G'$:
    It is a homomorphism since we have
    \begin{align*}
        \MerV[m](\EdgeFunOf{G}(e)) = \Set{\EqClassOf{v} \mid v \in \EdgeFunOf{G}(e)}
        = \EdgeFunOf{\Merged}(e)
        = \EdgeFunOf{\Merged}(\MerE[m](e))
    \end{align*}
    for every $e \in E(G)$.
    Obviously, $\MerV[m]$ and $\MerE[m]$ are surjective and bijective, respectively.
    Moreover, the relations $\MergeEquivOf{\MerV[m]}$ and $\MergeEquivOf{\IncV}$ are the same, i.e., we have $u \MergeEquivOf{\MerV[m]} v \iff \MerV[m](u) = \MerV[m](v)$ for all $u,v \in V(G)$:
    If we have $u \MergeEquivOf{\MerV[m]} v$, then we directly get $\MerV[m](u) = \MerV[m](v)$ by the definition of $\MergeEquivOf{\MerV[m]}$.
    For the other direction,  we assume that we have $\MerV[m](u) = \MerV[m](v)$, i.e., $u \MergeEquivOf{\IncV} v$.
    By definition of $\MergeEquivOf{\IncV}$, there is a walk $v_0, e_1, \dots, v_k$ from $u$ to $v$ in $G$ with $\IncV(v_{i-1}) = \IncV(v_{i})$ for every $i \in \NumbersTo{k}$.
    For every $i \in \NumbersTo{k}$, this also means that $v_{i-1} \MergeEquivOf{\IncV} v_{i}$, which implies $\MerV[m](v_{i-1}) = \MerV[m](v_{i})$.
    Thus, this walk gives us that $u \MergeEquivOf{\MerV[m]} v$.
    
    When setting
    \begin{align*}
        &\LocInjIncV[m](\EqClassOf{v}) \coloneqq \IncV(v)& &\text{and}& &\LocInjIncE[m](e) \coloneqq \IncE(e)
    \end{align*}
    for every $v \in V(G)$ and every $e \in E(\Merged)$, respectively, $\LocInjIncV[m]$ is clearly well-defined, and $\LocInjInc[m]$ is a locally injective incidence homomorphism from $\Merged$ to $H$:
    We have
    \begin{align*}
        \LocInjIncV[m](\EdgeFunOf{\Merged}(e)) = \LocInjIncV[m](\Set{\EqClassOf{v} \mid v \in \EdgeFunOf{G}(e)}) = \IncV(\EdgeFunOf{G}(e)) \subseteq \EdgeFunOf{H}(\IncE(e)) = \EdgeFunOf{H}(\LocInjIncE[m](e))
    \end{align*}
    for every $e \in E(\Merged)$, which means that it is an incidence homomorphism.
    Note that, if $\Inc$ is a homomorphism, then we even have equality, which means that $\LocInjInc[m]$ is a homomorphism in this case;
    this is of interest when proving the analogous result for the special case of homomorphisms.
    To continue, we observe that $\LocInjInc$ is locally injective:
    Let $e \in E(\Merged)$ and $u,v \in V(G)$ with $\EqClassOf{u}, \EqClassOf{v} \in \EdgeFunOf{\Merged}(e)$.
    Then, there are $u' \in \EqClassOf{u}$ and $v' \in \EqClassOf{v}$ with $u',v' \in \EdgeFunOf{G}(e)$.
    If $\LocInjIncV[m](\EqClassOf{u}) = \LocInjIncV[m](\EqClassOf{v})$, then we have $\IncV(u) = \IncV(v)$, from which we get $\IncV(u') = \IncV(u) = \IncV(v) = \IncV(v')$.
    Since $u',v' \in \EdgeFunOf{G}(e)$, the definition of $\MergeEquivOf{\IncV}$ yields $u' \MergeEquivOf{\IncV} v'$, which gives us $u \MergeEquivOf{\IncV} v$, i.e.,  $\EqClassOf{u} = \EqClassOf{v}$, since $\MergeEquivOf{\IncV}$ is an equivalence relation.
    
    The locally merging homomorphism $\Mer[m]$ and the locally injective incidence homomorphism $\LocInjInc[m]$ define $\Inc$, i.e., we have
    \begin{align*}
        &\LocInjIncV[m](\MerV[m](v)) = \LocInjIncV[m](\EqClassOf{v}) = \IncV(v)& &\text{and}& &\LocInjIncE[m](\MerE[m](e)) = \LocInjIncE[m](e) = \IncE(e)
    \end{align*}
    for every $v \in V(G)$ and every $e \in E(G)$, respectively.
    We finally define
    \begin{align*}
        &\MerV[\pi] \coloneqq \pi_V \circ \IsoV \circ \MerV[m],&
        &\MerE[\pi] \coloneqq \pi_E \circ \IsoE \circ \MerE[m],\\
        &\LocInjIncV[\pi] \coloneqq \LocInjIncV[m] \circ \IsoV^{-1} \circ \pi_V^{-1} \text{, and}&
        &\LocInjIncE[\pi] \coloneqq \LocInjIncE[m] \circ \IsoE^{-1} \circ \pi_E^{-1}
    \end{align*}
    for an automorphism $(\pi_V, \pi_E) \in \AutGroup(G')$ and observe that $\MerLocInjInc[\pi]$ also defines $\Inc$.

    To prove that two different automorphisms $(\pi_V, \pi_E), (\pi'_V, \pi'_E) \in \AutGroup(G')$ result in different pairs, we distinguish two cases:
    If we have $\pi_V \neq \pi'_V$, then we also have $\MerV[\pi] \neq \MerV[\pi']$ because $\MerV[m]$ is surjective.
    Otherwise, if $\pi_E \neq \pi'_E$, then we get $\MerE[\pi] \neq \MerE[\pi']$ because $\MerE[m]$ is bijective.
    In either case, we have $\Mer[\pi] \neq \Mer[\pi']$ and, in particular, we get different pairs for $\pi$ and $\pi'$.

    Let $\MerLocInjIncPrime$ be a pair of type $G''$ defining $\Inc$, i.e., we have $\LocInjIncVPrime \circ \MerVPrime = \IncV$ and $\LocInjIncEPrime \circ \MerEPrime = \IncE$.
    By constructing an isomorphism $(\sigma_V, \sigma_E)$ from $\Merged$ to $G''$, we prove that $G' = G''$.
    From this isomorphism, we also obtain an automorphism $\pi \in \AutGroup(G')$ such that $\MerLocInjInc[\pi] = \MerLocInjIncPrime$.
    We define
    \begin{align*}
        &\sigma_V(\EqClassOf{v}) \coloneqq \MerVPrime(v)& &\text{and}& &\sigma_E(e) \coloneqq \MerEPrime(e)
    \end{align*}
    for every $v \in V(G)$ and every $e \in E(\Merged)$, respectively.
    Trivially, $\sigma_E$ is bijective, and to prove that $\sigma_V$ is well-defined and bijective, it suffices that we observe that the relations $\MergeEquivOf{\IncV}$ and $\MergeEquivOf{\MerVPrime}$ are the same as this yields
    \begin{align*}
        u \MergeEquivOf{\IncV} v \iff u \MergeEquivOf{\MerVPrime} v \iff \MerVPrime(u) = \MerVPrime(v)
    \end{align*}
    for all $u,v \in V(G)$ since $\MerPrime$ is a locally merging homomorphism.
    Intuitively, the relations are the same because, if two vertices within an edge $e \in E(G)$ collapse, then this already has to happen in $\MerPrime$ since $\LocInjIncPrime$ is locally injective.
    Formally, we have
    \begin{align*}
        \IncV(u) = \IncV(v) \iff \LocInjIncVPrime(\MerVPrime(u)) = \LocInjIncVPrime(\MerVPrime(v)) \iff \MerVPrime(u) = \MerVPrime(v)
    \end{align*}
    for all $u,v \in \EdgeFunOf{G}(e)$ and every $e \in E(G)$ since $\LocInjIncPrime$ is locally injective.
    Then, by definition of the relations, we also get that $\MergeEquivOf{\IncV}$ and $\MergeEquivOf{\MerVPrime}$ are the same.
    To see that $(\sigma_V, \sigma_E)$ really is an isomorphism, we observe that we have
    \begin{align*}
        \sigma_V(\EdgeFunOf{\Merged}(e)) = \sigma_V(\Set{\EqClassOf{v} \mid v \in \EdgeFunOf{G}(e)}) = \MerVPrime(\EdgeFunOf{G}(e)) = \EdgeFunOf{G''}(\MerEPrime(e)) = \EdgeFunOf{G''}(\sigma_E(e))
    \end{align*}
    for every edge $e \in E(\Merged)$ since $\MerPrime$ is, in particular, a homomorphism, which yields $G' = G''$.

    We consider the automorphism $(\pi_V, \pi_E)$ of $G'$ obtained by setting
    \begin{align*}
        &\pi_V \coloneqq \sigma_V \circ \IsoV^{-1}& &\text{and} & &\pi_E \coloneqq \sigma_E \circ \IsoE^{-1},
    \end{align*}
    and we claim that $\MerLocInjInc[\pi] = \MerLocInjIncPrime$.
    We have
    \begin{align*}
        \MerV[\pi](v) = \sigma_V(\IsoV^{-1}(\IsoV(\MerV[m](v)))) = \sigma_V(\MerV[m](v)) = \sigma_V(\EqClassOf{v}) = \MerVPrime(v)
    \end{align*}
    for every $v \in V(G)$ and
    \begin{align*}
        \MerE[\pi](e) = \sigma_E(\IsoE^{-1}(\IsoE(\MerE[m](e)))) = \sigma_E(\MerE[m](e)) = \sigma_E(e) = \MerEPrime(e)
    \end{align*}
    for every $e \in E(G)$, which proves that $\Mer[\pi] = \MerPrime$.
    Moreover, we have
    \begin{align*}
        \LocInjIncV[\pi](\sigma_V(\EqClassOf{v})) &= \LocInjIncV[m](\IsoV^{-1}(\IsoV(\sigma_V^{-1}(\sigma_V(\EqClassOf{v})))))\\
        &= \LocInjIncV[m](\EqClassOf{v})\\
        &= \IncV(v)\\
        &= \LocInjIncVPrime(\MerVPrime(v))\\
        &= \LocInjIncVPrime(\sigma_V(\EqClassOf{v}))
    \end{align*}
    for every $v \in V(G)$ and
    \begin{align*}
        \LocInjIncE[\pi](\sigma_E(e)) &= \LocInjIncE[m](\IsoE^{-1}(\IsoE(\sigma_E^{-1}(\sigma_E(e)))))\\
        &= \LocInjIncE[m](e)\\
        &= \IncE(e)\\
        &= \LocInjIncEPrime(\MerEPrime(e))\\
        &= \LocInjIncEPrime(\sigma_E(e))
    \end{align*}
    for every $e \in V(\Merged)$, which proves that $\LocInjInc[\pi] = \LocInjIncPrime$ since both $\sigma_V$ and $\sigma_E$ are surjective.

    To prove that $\EMergHom$ is a lower triangular matrix, let $G$ and $H$ be distinct connected hypergraphs with $\abs{V(G)} + \abs{E(G)} \le \abs{V(H)} + \abs{E(H)}$.
    Assume that $\EMergHom(G, H) > 0$, i.e., that there is a locally merging homomorphism $(h_V, h_E)$ from $G$ to $H$.
    Since $h_V$ is surjective, we get that $\abs{V(G)} \ge \abs{V(H)}$, and since $h_E$ is bijective, this yields $\abs{V(G)} = \abs{V(H)}$ with the assumption.
    Hence, $h_V$ is bijective, which implies that $(h_V, h_E)$ is an isomorphism and contradicts the assumption that $G$ and $H$ are distinct.
    Moreover, the diagonal entries of $\EMergHom$ are non-zero since we have $\EMergHom(G, G) = \Aut(G) > 0$ for every connected hypergraph $G$, i.e., $\EMergHom$ is invertible.
\end{proof}

\begin{proof}[\Cref{le:hy:HomEInjHom}]
    This is a special case of the proof of \Cref{le:hy:IHomLocInjIHom}:
    Instead of considering pairs $\MerLocInjInc$ where $\Mer$ is a locally merging homomorphism and $\LocInjInc$ a locally injective incidence homomorphism, we additionally require $\LocInjInc$ to be a homomorphism.
    Then, such a pair defines a homomorphism as the composition of two homomorphisms, and the restriction of the mapping $\Inc \mapsto \MerLocInjIncFam$ to homomorphisms $\Inc$ only yields such pairs.
    Restricting the mapping to a subset of incidence homomorphisms does not change the facts that all pairs in the family $\MerLocInjIncFam$ are distinct and that every pair defining $\Inc$ is part of this family.
\end{proof}

\begin{proof}[\Cref{le:hy:LocInjIHomLocInjHom}]
    Let $G$ be a connected hypergraph, and let $H$ be a hypergraph.
    We have to prove that
    \begin{align}
        \LocInjIHom(G,H) = \sum_{G'} \LeafAddingI(G,G') \cdot \Aut^{-1}(G') \cdot \LocInjHom(G', H) \tag{$\ast$} \label{le:hy:LocInjIHomLocInjHom:equation}
    \end{align}
    holds, where the sum ranges over all connected hypergraphs $G'$.
    Note that a hypergraph $G'$ with $\LeafAddingI(G,G') > 0$ has exactly as many edges as $G$.
    Moreover, if we additionally have $\LocInjHom(G', H) > 0$, then $G'$ has at most $\abs{V(H)}$ vertices per edge.
    Hence, we also get that the number of its vertices is bounded by $\abs{E(G)} \cdot \abs{V(H)}$ as long as it is not an isolated vertex, which means that there are only finitely many such graphs $G'$, and we get that the sum is finite, and thus, well-defined.
    We consider pairs $\LAddLocInj$ where
    \begin{itemize}
        \item $\LAdd$ is a leaf-adding in.\ hom.\ from $G$ to a connected hypergraph $G'$, and
        \item $\LocInj$ is a locally injective homomorphism from $G'$ to $H$.
    \end{itemize}
    For such a pair, we call $G'$ its type and say that it \textit{defines} the locally injective incidence homomorphism $(\LocInjV \circ \LAddV, \LocInjE \circ \LAddE)$ from $G$ to $H$.
    Observe that $(\LocInjV \circ \LAddV, \LocInjE \circ \LAddE)$ really is an incidence homomorphism as the composition of two incidence homomorphisms and that it is locally injective as $\LAddV$ is injective and $\LocInj$ is locally injective.
    To prove (\ref{le:hy:LocInjIHomLocInjHom:equation}), we devise a mapping of locally injective incidence homomorphisms $\LocInjInc$ from $G$ to $H$ to families $\LAddLocInjFam$ of pairwise distinct pairs of type $G'$ that all define $\LocInjInc$.
    We prove that every pair defining $\LocInjInc$ is already part of this family.
    By doing so, we partition the pairs of type $G'$ into families of size $\Aut(G')$, and from those, we have a one-to-one correspondence to locally injective incidence homomorphisms from $G$ to $H$, which proves (\ref{le:hy:LocInjIHomLocInjHom:equation}).

    To define the mapping $\LocInjInc \mapsto \LAddLocInjFam$, we first construct a connected hypergraph $\Balanced$ and define $G'$ to be its isomorphism type.
    Intuitively, we obtain $\Balanced$ from $G$ by filling up its edges with fresh leaves until $\LocInjInc$ becomes a homomorphism.
    \newcommand{\NumDummies}[1]{n_{#1}}%
    Formally, for every $e \in E(G)$, we let
    \begin{align*}
        \NumDummies{e} \coloneqq \abs{\EdgeFunOf{H}(\LocInjIncE(e)) \setminus \LocInjIncV(\EdgeFunOf{G}(e))}
    \end{align*}
    be the number of leaves we add to the edge $e$, and for every $i \in \NumbersTo{\NumDummies{e}}$, we let $v^{e}_{i}$ be a fresh vertex.
    We define $\Balanced$ by setting
    \begin{align*}
        &V(\Balanced) \coloneqq V(G) \cup \Set{v^{e}_{i} \mid e \in E(G),\, i \in \NumbersTo{\NumDummies{e}}},& &E(\Balanced) \coloneqq E(G),
    \end{align*}
    and
    \begin{align*}
        \EdgeFunOf{\Balanced}(e) \coloneqq \EdgeFunOf{G}(e) \cup \Set{v^{e}_{i} \mid i \in \NumbersTo{\NumDummies{e}}}
    \end{align*}
    for every $e \in E(\Balanced)$.
    We let $G'$ be the isomorphism type of $\Balanced$ and fix an arbitrary isomorphism $(\IsoBalV, \IsoBalE)$ from $\Balanced$ to $G'$.

    By setting
    \begin{align*}
        &\LAddV[a](v) \coloneqq v& &\text{and}& &\LAddE[a](e) \coloneqq e
    \end{align*}
    for every $v \in V(G)$ and every $e \in E(G)$, respectively, we get that $\LAdd[a]$ is a leaf-adding incidence homomorphism from $G$ to $\Balanced$:
    we have
    \begin{align*}
        \LAddV[a](\EdgeFunOf{G}(e)) = \EdgeFunOf{G}(e) \subseteq \EdgeFunOf{\Balanced}(e) = \EdgeFunOf{\Balanced}(\LAddE[a](e))
    \end{align*}
    and
    \begin{align*}
        \LAddV[a]^{-1}(\EdgeFunOf{\Balanced}(\LAddE[a](e)) \cap \ImageOf{\LAddV[a]}) = \LAddV[a]^{-1}(\EdgeFunOf{\Balanced}(e) \cap \ImageOf{\LAddV[a]}) = \LAddV[a]^{-1}(\EdgeFunOf{G}(e)) = \EdgeFunOf{G}(e)
    \end{align*}
    for every $e \in E(\Balanced)$, i.e., $\LAdd[a]$ is a strong incidence homomorphism, and the other properties of a leaf-adding incidence homomorphism are trivially satisfied.
    By setting
    \begin{align*}
        &\LocInjV[a](v) \coloneqq \LocInjIncV(v)& &\text{and}& &\LocInjE[a](e) \coloneqq \LocInjIncE(e)
    \end{align*}
    for every $v \in V(G)$ and every $e \in E(\Balanced)$, respectively, and additionally letting $\LocInjV[a]$ map the vertices in $\Set{v^{e}_{i} \mid i \in \NumbersTo{\NumDummies{e}}}$ bijectively to these in $\EdgeFunOf{H}(\LocInjIncE(e)) \setminus \LocInjIncV(\EdgeFunOf{G}(e))$ for every $e \in E(\Balanced)$, we obtain a locally injective homomorphism from $\Balanced$ to $H$:
    We have
    \begin{align*}
        \LocInjV[a](\EdgeFunOf{\Balanced}(e)) &= \LocInjV[a](\EdgeFunOf{G}(e) \cup \Set{v^{e}_{i} \mid i \in \NumbersTo{\NumDummies{e}}})\\
        &= \LocInjIncV(\EdgeFunOf{G}(e)) \cup (\EdgeFunOf{H}(\LocInjIncE(e)) \setminus \LocInjIncV(\EdgeFunOf{G}(e)))\\
        &= \EdgeFunOf{H}(\LocInjIncE(e))
    \end{align*}
    for every $e \in E(\Balanced)$, i.e., $\LocInj[a]$ is a homomorphism.
    Moreover, since $\LocInj$ is locally injective, the definition of $\LocInjInc[a]$ immediately yields that it also is locally injective as the leaves added to an edge $e$ are bijectively mapped to vertices not hit before by the vertices of $e$.
    The leaf-adding incidence homomorphism $\LAdd[a]$ and the locally injective homomorphism $\LocInj[a]$ define $\LocInjInc$, i.e.,
    we have
    \begin{align*}
        &\LocInjV[a](\LAddV[a](v)) = \LocInjV[a](v) = \LocInjIncV(v)& &\text{and}& &\LocInjE[a](\LAddE[a](e)) = \LocInjE[a](e) = \LocInjIncE(e)
    \end{align*}
    for every $v \in V(G)$ and every $e \in E(G)$, respectively.
    We finally define
    \begin{align*}
        &\LAddV[\pi] \coloneqq \pi_V \circ \IsoBalV \circ \LAddV[a],&
        &\LAddE[\pi] \coloneqq \pi_E \circ \IsoBalE \circ \LAddE[a],\\
        &\LocInjV[\pi] \coloneqq \LocInjV[a] \circ \IsoBalV^{-1} \circ \pi_V^{-1} \text{, and}&
        &\LocInjE[\pi] \coloneqq \LocInjE[a] \circ \IsoBalE^{-1} \circ \pi_E^{-1}
    \end{align*}
    for an automorphism $(\pi_V, \pi_E) \in \AutGroup(G')$ and observe that $\LAddLocInj[\pi]$ also defines $\LocInjInc$.

    To prove that two different automorphisms $(\pi_V, \pi_E), (\pi'_V, \pi'_E) \in \AutGroup(G')$ result in different pairs, we distinguish three cases:
    If $\pi_E \neq \pi'_E$, then we have $\LAddE[\pi] \neq \LAddE[\pi']$ since $\LAddE[a]$ and $\IsoBalE$ are surjective.
    Otherwise, in the remaining two cases, we have $\pi_E = \pi'_E$ and $\pi_V \neq \pi'_V$.
    For the second case, assume that there is a vertex $v \in \IsoBalV(\LAddV[a](V(G)))$ with $\pi_V(v) \neq \pi'_V(v)$.
    Then, we directly have $\LAddV[\pi] \neq \LAddV[\pi']$.
    For the third case, assume that $\pi_V$ and $\pi'_V$ are the same on $\IsoBalV(\LAddV[a](V(G)))$.
    We observe that $\pi_V^{-1}$ and $\pi'^{-1}_V$ are the same on all non-leaves of $G'$:
    As $\LAdd[a]$ is leaf-adding and $\IsoBal$ an isomorphism, the assumption of the case implies that $\pi_V$ and $\pi'_V$ differ only on leaves and are the same on all non-leaves.
    Since non-leaves are mapped to non-leaves by $\pi_V$ and $\pi'_V$, their inverses also are the same on non-leaves.
    To continue, because $\pi_V \neq \pi'_V$, there is vertex $v \in V(G')$ with $\pi^{-1}_V(v) \neq \pi'^{-1}_V(v)$, which has to be a leaf and, hence, is connected to an edge $e \in E(G')$.
    By $\pi_E = \pi'_E$, we also have $\pi^{-1}_E = \pi'^{-1}_E$ and get that both $\pi^{-1}_V(v)$ and $\pi'^{-1}_V(v)$ are elements of the edge $\pi^{-1}_E(e) = \pi'^{-1}_E(e)$, which implies that
    \begin{align*}
       \LocInjV[\pi](v) = \LocInjV[a](\IsoBalV^{-1}(\pi_V^{-1}(v))) \neq \LocInjV[a](\IsoBalV^{-1}(\pi'^{-1}_V(v))) = \LocInjV[\pi'](v)
    \end{align*}
    because $\LocInjV[a]$ is locally injective and $\IsoBal$ an isomorphism.
    Therefore, we have $\LocInjV[\pi] \neq \LocInjV[\pi']$.

    Let $\LAddLocInjPrime$ be a pair of type $G''$ defining $\LocInjInc$, i.e., we have $\LocInjVPrime \circ \LAddVPrime = \LocInjIncV$ and $\LocInjEPrime \circ \LAddEPrime = \LocInjIncE$.
    By constructing an isomorphism $(\sigma_V, \sigma_E)$ from $\Balanced$ to $G''$, we prove that $G' = G''$.
    From this isomorphism, we also obtain an automorphism $\pi \in \AutGroup(G')$ such that $\LAddLocInj[\pi] = \LAddLocInjPrime$.
    We let
    \begin{align*}
        \sigma_E(e) \coloneqq \LAddEPrime(\LAddE[a]^{-1}(e)) = \LAddEPrime(e)
    \end{align*}
    for every $e \in E(\Balanced)$, which is a bijection by the definition of a leaf-adding incidence homomorphism.
    In contrast, defining $\sigma_V$ is not straightforward due to the vertices of $G''$ not hit by $\LAddVPrime$, i.e., the added leaves.
    For $v \in V(G)$, we set
    \begin{align*}
        \sigma_V(v) \coloneqq \LAddVPrime(v)
    \end{align*}
    and obtain a bijection from $V(G)$ to $\LAddVPrime(V(G))$, which we extend to $V(\Balanced)$ in the following.
    Let $e \in E(G)$ and consider the corresponding edge $\LAddEPrime(e)$ in $G''$.
    Since $\LocInjPrime$ is a locally injective homomorphism, $\LocInjVPrime$ bijectively maps $\EdgeFunOf{G''}(\LAddEPrime(e))$ to $\EdgeFunOf{H}(\LocInjEPrime(\LAddEPrime(e)))$.
    Moreover, the subset $\LAddVPrime(\EdgeFunOf{G}(e))$ is mapped bijectively to $\LocInjVPrime(\LAddVPrime(\EdgeFunOf{G}(e)))$.
    Therefore, the restriction of $\LocInjVPrime$ to $\EdgeFunOf{G''}(\LAddEPrime(e)) \setminus \LAddVPrime(\EdgeFunOf{G}(e))$ is a bijection to $\EdgeFunOf{H}(\LocInjEPrime(\LAddEPrime(e))) \setminus \LocInjVPrime(\LAddVPrime(\EdgeFunOf{G}(e)))$, i.e., to $\EdgeFunOf{H}(\LocInjIncE(e)) \setminus \LocInjIncV(\EdgeFunOf{G}(e))$.
    Recall that, by definition, $\LocInjE[a]$ maps $\Set{v^{e}_{i} \mid i \in \NumbersTo{\NumDummies{e}}}$ bijectively to $\EdgeFunOf{H}(\LocInjIncE(e)) \setminus \LocInjIncV(\EdgeFunOf{G}(e))$.
    Thus, by setting
    \begin{align*}
        \sigma_V(v^e_i) \coloneqq \restr{\LocInjVPrime}{\EdgeFunOf{G''}(\LAddEPrime(e)) \setminus \LAddVPrime(\EdgeFunOf{G}(e))}^{-1}(\LocInjV[a](v^e_i))
    \end{align*}
    for every $i \in \NumbersTo{\NumDummies{e}}$, we bijectively map $\Set{v^{e}_{i} \mid i \in \NumbersTo{\NumDummies{e}}}$ to $\EdgeFunOf{G''}(\LAddEPrime(e)) \setminus \LAddVPrime(\EdgeFunOf{G}(e))$.

    To prove that $\sigma_V$ is a bijection from $V(\Balanced)$ to $V(G'')$, it suffices to observe that
    \begin{align*}
        \dot{\bigcup_{e \in E(G)}} \EdgeFunOf{G''}(\LAddEPrime(e)) \setminus \LAddVPrime(\EdgeFunOf{G}(e)) = V(G'') \setminus \LAddVPrime(V(G)).
    \end{align*}
    For an edge $e \in E(G)$, Requirement~\ref{en:hy:leafadding:strong} of the definition of a leaf-adding incidence homomorphism implies $\EdgeFunOf{G''}(\LAddEPrime(e)) \cap \ImageOf{\LAddVPrime} \subseteq \LAddVPrime(\EdgeFunOf{G}(e))$ or, equivalently, $\EdgeFunOf{G''}(\LAddEPrime(e)) \setminus \LAddVPrime(\EdgeFunOf{G}(e)) \subseteq V(G'') \setminus \ImageOf{\LAddVPrime}$, which already yields the inclusion \enquote{$\subseteq$}.
    Furthermore, Requirement~\ref{en:hy:leafadding:notHitLeaf} of the definition yields that the vertices $\EdgeFunOf{G''}(\LAddEPrime(e)) \setminus \LAddVPrime(\EdgeFunOf{G}(e)) \subseteq V(G'')$ are leaves, which implies that the sets being united are disjoint as $\LAddEPrime$ is injective.
    The inclusion \enquote{$\supseteq$} is also directly apparent from the definition of a leaf-adding incidence homomorphism as every vertex not hit by $\LAddVPrime$ is a leaf and, thus, element of an edge.
    Finally, we observe that $\sigma$ is an isomorphism and, thus, $G' = G''$ as, for every $e \in E(\Balanced)$, we have
    \begin{align*}
        \sigma_V(\EdgeFunOf{\Balanced}(e)) &= \sigma_V(\EdgeFunOf{G}(e) \cup \Set{v^{e}_{i} \mid i \in \NumbersTo{\NumDummies{e}}})\\
        &= \LAddVPrime(\EdgeFunOf{G}(e)) \cup (\EdgeFunOf{G''}(\LAddEPrime(e)) \setminus \LAddVPrime(\EdgeFunOf{G}(e)))\\
        &= \EdgeFunOf{G''}(\LAddEPrime(e))\\
        &= \EdgeFunOf{G''}(\sigma_E(e)).
    \end{align*}

    We consider the automorphism $(\pi_V, \pi_E)$ of $G'$ obtained by setting
    \begin{align*}
        &\pi_V \coloneqq \sigma_V \circ \IsoBalV^{-1}& &\text{and} & &\pi_E \coloneqq \sigma_E \circ \IsoBalE^{-1},
    \end{align*}
    and we claim that $\LAddLocInj[\pi] = \LAddLocInjPrime$.
    We have
    \begin{align*}
        \LAddV[\pi](v) = \sigma_V(\IsoBalV^{-1}(\IsoBalV(\LAddV[a](v)))) = \sigma_V(\LAddV[a](v)) = \sigma_V(v) = \LAddVPrime(v)
    \end{align*}
    for every $v \in V(G)$, which proves that $\LAddV[\pi] = \LAddVPrime$, and
    \begin{align*}
        \LAddE[\pi](e) = \sigma_E(\IsoBalE^{-1}(\IsoBalE(\LAddE[a](e)))) = \sigma_E(\LAddE[a](e)) = \sigma_E(e) = \LAddEPrime(e)
    \end{align*}
    for every $e \in E(G)$, which proves that $\LAddE[\pi] = \LAddEPrime$.
    Furthermore, we have
    \begin{align*}
        \LocInjV[\pi](\sigma_V(v))
        &= \LocInjV[a](\IsoBalV^{-1}(\IsoBalV(\sigma_V^{-1}(\sigma_V(v))))\\
        &= \LocInjV[a](v)\\
        &= \LocInjIncV(v)\\
        &= \LocInjVPrime(\LAddVPrime(v))\\
        &= \LocInjVPrime(\sigma_V(v))
    \end{align*}
    for every $v \in V(\Balanced)$ and
    \begin{align*}
        \LocInjV[\pi](\sigma_V(v^e_i))
        &= \LocInjV[a](\IsoBalV^{-1}(\IsoBalV(\sigma_V^{-1}(\sigma_V(v^e_i))))\\
        &= \LocInjV[a](v^e_i)\\
        &= \LocInjVPrime(\restr{\LocInjVPrime}{\EdgeFunOf{G''}(\LAddEPrime(e)) \setminus \LAddVPrime(\EdgeFunOf{G}(e))}^{-1}(\LocInjV[a](v^e_i)))\\
        &= \LocInjVPrime(\sigma_V(v^e_i))
    \end{align*}
    for every $e \in E(G)$ and $i \in \NumbersTo{\NumDummies{e}}$, which proves that $\LocInjV[\pi] = \LocInjVPrime$.
    Finally, we have
    \begin{align*}
        \LocInjE[\pi](\sigma_E(e))
        &= \LocInjE[a](\IsoBalE^{-1}(\IsoBalE(\sigma_E^{-1}(\sigma_E(e))))\\
        &= \LocInjE[a](e)\\
        &= \LocInjIncE(e)\\
        &= \LocInjEPrime(\LAddEPrime(e))\\
        &= \LocInjEPrime(\sigma_E(e))
    \end{align*}
    for every $e \in E(\Balanced)$, which proves that $\LocInjE[\pi] = \LocInjEPrime$.

    To prove that $\LeafAddingI$ is an upper triangular matrix, let $G$ and $H$ be distinct connected hypergraphs with $\abs{V(G)} + \abs{E(G)} \ge \abs{V(H)} + \abs{E(H)}$.
    Assume that $\EMergHom(G, H) > 0$, i.e., that there is a leaf-adding incidence homomorphism $(h_V, h_E)$ from $G$ to $H$.
    We get that $\abs{V(G)} \le \abs{V(H)}$ because $h_V$ is injective, and since $h_E$ is bijective, this yields $\abs{V(G)} = \abs{V(H)}$ with the assumption.
    Hence, $h_V$ is bijective, which implies that $(h_V, h_E)$ is a homomorphism and, furthermore, an isomorphism and contradicts the assumption that $G$ and $H$ are distinct.
    Moreover, the diagonal entries of $\LeafAddingI$ are non-zero since we have $\LeafAddingI(G, G) = \Aut(G) > 0$ for every connected hypergraph $G$, i.e., $\LeafAddingI$ is invertible.
\end{proof}

\begin{proof}[\Cref{le:hy:bachsmbachs}]
    With \Cref{th:hy:countingMBACHIncHoms}, it suffices to prove that, for all simple hypergraphs $G$ and $H$, we have $\IHOM{\BACHs}{G} = \IHOM{\BACHs}{H}$ if and only if $\IHOM{\SBACHs}{G} = \IHOM{\SBACHs}{H}$, where the forward direction is trivial.

    To introduce some notation, let $G$ be a hypergraph with $n \coloneqq \abs{V(G)}$ vertices, and let $v \in V(G)$ be a vertex.
    The \textit{degree} of $v$ is given by
    \begin{align*}
        \DegreeOf{v} \coloneqq \abs{\Set{e \in E(G) \mid v \in \EdgeFunOf{G}(e)}}
    \end{align*}
    and, for every $i \in \NumbersTo{n}$, the \textit{$i$-degree} of $v$ is given by
    \begin{align*}
        \IDegreeOf{i}{v} \coloneqq \abs{\MultiSet{e \in E(G) \mid v \in \EdgeFunOf{G}(e),\, \abs{\EdgeFunOf{G}(e)} = i}}.
    \end{align*}
    We define the \textit{degree sequence} $\DegreeSeqOf{v} \coloneqq (\IDegreeOf{1}{v}, \dots, \IDegreeOf{n}{v}) \in \N^{n}$ of $v$ and note that we have $\sum_{i = 1}^{n} \DegreeSeqOf{v}_{i} = \DegreeOf{v}$.
    Observe that, if $G$ is simple, then we have $\DegreeOf{v} \in \Set{0, \dots, 2^{n-1}}$ and $\IDegreeOf{i}{v} \in \Set{0, \dots, \binom{n-1}{i-1}}$ for every $i \in \NumbersTo{n}$.
    In this case, setting $\DegreeSeqsOf{n} \coloneqq \NumbersTo{\binom{n-1}{0}}\ \times \dots \times \NumbersTo{\binom{n-1}{n-1}}$ yields a finite set of all degree sequences a vertex may have.
    For our interpolation argument, however, letting $\DegreeSeqsOf{n}$ be the larger set $\NumbersTo{2^{n-1}}^{n}$ would also suffice.

    To prove the non-trivial direction, we show that, for every hypergraph $G$, every entry of $\IHOM{\BACHs}{G}$ is determined by the entries of $\IHOM{\SBACHs}{G}$.
    To this end, let $B$ be a connected Berge-acyclic hypergraph, and let $G$ be a hypergraph.
    We prove the statement by induction on the number of vertices of $B$ that have parallel loops, which are the only form of parallel edges in $B$ since it is Berge-acyclic.
    If $B$ does not have any parallel loops, the claim trivially holds.

    For the inductive step, we fix a vertex $u \in V(B)$ with parallel loops, i.e., we have $\abs{\LeavesOf{u}} \ge 2$ for the set $\LeavesOf{u} \coloneqq \Set{e \in E(B) \mid \EdgeFunOf{B}(e) = \Set{u}}$ of loops at $u$.
    \newcommand{\FreshEdge}{e_\ell}%
    We define
    \begin{align*}
        &V(B') \coloneqq V(B),& &E(B') \coloneqq (E(B) \setminus \LeavesOf{u}) \cupdot \Set{\FreshEdge},
    \end{align*}
    where $\FreshEdge$ is a fresh edge, and
    \begin{align*}
        \EdgeFunOf{B'}(e) \coloneqq
        \begin{cases}
            \EdgeFunOf{B}(e) &\text{ if } e \in E(B) \setminus \LeavesOf{u},\\
            \Set{u} &\text{ if } e = \FreshEdge
        \end{cases}
    \end{align*}
    for every $e \in E(B')$, i.e., we obtain $B'$ from $B$ by merging the loops at $u$ into a single one.
    Let $n \coloneqq \abs{V(G)} = \IHom(\KnOf{1},G)$ be the number of vertices of $G$, and for every~$i \ge 0$, let $\IHomDegree{B'}{G}{i}$ denote the number of incidence homomorphisms from $B'$ to $G$ that map $u$ to a vertex of degree $i$.
    Note that, since $u$ has a loop in $B'$, we have $\IHomDegree{B'}{G}{0} = 0$.
    Moreover, a vertex of $G$ has at most degree $2^{n-1}$, which means that $\IHomDegree{B'}{G}{i} = 0$ for every $i > 2^{n-1}$.
    Observe that we have
    \begin{align*}
        \IHom(B, G) = \sum_{i = 1}^{2^{n-1}} \IHomDegree{B'}{G}{i} \cdot i ^{\abs{\LeavesOf{u}}-1},
    \end{align*}
    i.e., it suffices to prove that the values $\IHomDegree{B'}{G}{1}, \dots, \IHomDegree{B'}{G}{2^{n-1}}$ are determined by $\IHOM{\SBACHs}{G}$.
    \newcommand{\Bkl}{B_{r, s}}%

    \begin{figure}[htbp]
    	\centering
        \scalebox{0.8}{
    	\begin{tikzpicture}[node distance = \VertexDistance cm]
    		\node[hypervertex, label={270:$u$}] (u) {};

    		\node[] at ($(u) + (-0.5*\VertexDistance,\EdgeDistance)$) (oldL) {};
            \node[] at ($(u) + (0.5*\VertexDistance,\EdgeDistance)$) (oldR) {};

            \node[hypervertex, label={270:$v^1_1$}] at ($(u) + (1.5 * \VertexDistance, 0)$) (v11) {};
            \node[hypervertex, label={270:$v^1_s$}] at ($(u) + (2.5 * \VertexDistance, 0)$) (v1s) {};
            \node[hyperedge, label={90:$e_1$}] at ($(v11)!0.5!(v1s) + (0,\EdgeDistance)$) (e1) {};

            \node[] at ($(v11)!0.5!(v1s)$) {$\ldots\ldots$};

            \draw[hyperedgeedge] (u) to (e1);
            \draw[hyperedge] (e1) to (v11);
            \draw[hyperedge] (e1) to (v1s);

            \node[hypervertex, label={270:$v^r_1$}] at ($(u) + (4 * \VertexDistance, 0)$) (vr1) {};
            \node[hypervertex, label={270:$v^r_s$}] at ($(u) + (5 * \VertexDistance, 0)$) (vrs) {};
            \node[hyperedge, label={90:$e_r$}] at ($(vr1)!0.5!(vrs) + (0,\EdgeDistance)$) (er) {};

            \node[] at ($(vr1)!0.5!(vrs)$) {$\ldots\ldots$};

            \draw[hyperedgeedge] (u) to (er);
            \draw[hyperedge] (er) to (vr1);
            \draw[hyperedge] (er) to (vrs);

            \node[] at ($(e1)!0.5!(er)$) {$\ldots\ldots\ldots\ldots$};
            \draw[hyperedgeedge] (u) to  ($(u)!0.7!(oldL)$);
            \draw[hyperedgeedge, dotted] ($(u)!0.7!(oldL)$) to  ($(u)!0.95!(oldL)$);

            \draw[hyperedgeedge] (u) to  ($(u)!0.7!(oldR)$);
            \draw[hyperedgeedge, dotted] ($(u)!0.7!(oldR)$) to  ($(u)!0.95!(oldR)$);

            \node[] at ($(oldL)!0.5!(oldR) - (0, 0.15 * \EdgeDistance)$) {$\ldots\ldots$};
    	\end{tikzpicture}
        }
    	\caption{Construction of $\Bkl$ in the proof of \Cref{le:hy:bachsmbachs}}
    	\label{fig:hy:interpolationconstruction}
    \end{figure}

    To obtain these values via an interpolation argument, for every $r \ge 0$ and every $s > 0$, we define $\Bkl$ from $B'$ by adding $r$ edges that each contain the vertex $u$ and $s$ fresh vertices.
    Formally, we let
    \begin{align*}
        V(\Bkl) \coloneqq V(B') \cupdot \Set{v^{i}_{j} \mid i \in \NumbersTo{r}, \, j \in \NumbersTo{s}},
    \end{align*}
    where $v^{i}_{j}$ is a fresh vertex for all $i \in \NumbersTo{r}, \,j \in \NumbersTo{s}$,
    \begin{align*}
        E(\Bkl) \coloneqq E(B') \cupdot \Set{e_1, \dots, e_r},
    \end{align*}
    where $e_1, \dots, e_r$ are fresh edges, and
    \begin{align*}
        \EdgeFunOf{\Bkl}(e) \coloneqq
        \begin{cases}
            \EdgeFunOf{B'}(e) & \text{ if }e \in E(B'),\\
            \Set{u} \cup \Set{v^{i}_{j} \mid j \in \NumbersTo{s}} & \text{ if } e = e_i \text{ for } i \in \NumbersTo{r}
        \end{cases}
    \end{align*}
    for every $e \in E(\Bkl)$.
    Since we require $s$ to be strictly greater than zero, the vertex $u$ does not have any parallel loops, which means that $\Bkl$ has fewer vertices with parallel loops than $B$, i.e., the induction hypothesis is applicable to it.
    By partitioning the incidence homomorphisms from $\Bkl$ to $G$ according to the degree sequence of the vertex that $u$ is mapped to, observe that we have
    \begin{align*}
        \IHom(\Bkl, G) = \sum_{\bar{d} \in \DegreeSeqsOf{n}} \IHomDegreeSeq{B'}{G}{\bar{d}} \cdot \left( \sum_{i = 1}^{n} d_i \cdot i^{s}  \right)^r
    \end{align*}
    for every $r \ge 0$ and every $s > 0$, where for a degree sequence $\bar{d} \in \DegreeSeqsOf{n}$, we let $\IHomDegreeSeq{B'}{G}{\bar{d}}$ denote the number of incidence homomorphisms from $B'$ to $G$ that map $u$ to a vertex with the degree sequence $\bar{d}$.
    
    Let $\bar{d}_1, \dots, \bar{d}_\ell$ be an enumeration of $\DegreeSeqsOf{n}$.
    For every $s > 0$, we obtain the system
    \begin{center}
    \resizebox{1.0\linewidth}{!}{
    \begin{minipage}{\linewidth}
    \begin{align*}
        \begin{pmatrix}
            1 & 1 & 1\\
            \left( \sum_{i = 1}^{n} d_{1,i} \cdot i^{s}  \right)^1 & \hdots & \left( \sum_{i = 1}^{n} d_{\ell,i} \cdot i^{s}  \right)^1\\
            \vdots & \ddots & \vdots\\
            \left( \sum_{i = 1}^{n} d_{1,i} \cdot i^{s}  \right)^{\ell-1} & \hdots & \left( \sum_{i = 1}^{n} d_{\ell,i} \cdot i^{s}  \right)^{\ell-1}
        \end{pmatrix}
        \cdot
        \begin{pmatrix}
            \IHomDegreeSeq{B'}{G}{\bar{d}_1}\\
            \IHomDegreeSeq{B'}{G}{\bar{d}_2}\\
            \vdots\\
            \IHomDegreeSeq{B'}{G}{\bar{d}_{\ell}}
        \end{pmatrix}
        =
        \begin{pmatrix}
            \IHom(B_{0, s}, G)\\
            \IHom(B_{1, s}, G)\\
            \vdots\\
            \IHom(B_{\ell-1, s}, G)
        \end{pmatrix}\\
    \end{align*}
    \end{minipage}
    }
    \end{center}
    of linear equations, where the matrix is the transpose of the Vandermonde matrix $V(\sum_{i = 1}^{n} d_{1,i} \cdot i^{s}, \dots, \sum_{i = 1}^{n} d_{\ell, i} \cdot i^{s})$.
    By choosing $s$ to be large enough, we are able to ensure that these values are pairwise distinct, and thus, that the matrix is invertible:
We choose a large enough $s$ such that we have $\sum_{i = 1}^{j - 1} 2^{n-1} \cdot i^{s} < j^s$ for every $j \in \NumbersTo{n}$, which is certainly possible since we have $\lim_{s \rightarrow \infty} \sum_{i = 1}^{j - 1} i^{s} / j^s = 0$ for every $j \in \NumbersTo{n}$.
    To see that this is in fact sufficient, let $\bar{d}, \bar{d}' \in \DegreeSeqsOf{n}$ with $\bar{d} \neq \bar{d}'$.
    We choose the maximum $j \in \NumbersTo{n}$ such that $d_j \neq d'_j$, where we assume $d_j > d'_j$ without loss of generality.
    Then, we have
    \begin{align*}
        \sum_{i = 1}^{n} d_i \cdot i^s \ge d_j \cdot j^s + \sum_{i = j + 1}^{n} d_i \cdot i^s
        &= d_j \cdot j^s + \sum_{i = j + 1}^{n} d'_i \cdot i^s\\
        &= j^s + (d_j - 1) \cdot j^s + \sum_{i = j + 1}^{n} d'_i \cdot i^s\\
        &> \sum_{i = 1}^{j-1} 2^{n-1} \cdot i^s + (d_j - 1) \cdot j^s + \sum_{i = j + 1}^{n} d'_i \cdot i^s\\
        &\ge \sum_{i = 1}^{j-1} d'_i \cdot i^s + d'_j \cdot j^s + \sum_{i = j + 1}^{n} d'_i \cdot i^s.
    \end{align*}

    By the induction hypothesis, the values $\IHom(B_{0, s}, G), \dots, \IHom(B_{\ell -1, s}, G)$ are determined by $\IHOM{\SBACHs}{G}$, and the invertibility of the matrix yields the same for the value $\IHomDegreeSeq{B'}{G}{\bar{d}}$ for every $\bar{d} \in \DegreeSeqsOf{n}$.
    Because we have
    \begin{align*}
        \IHomDegree{B'}{G}{i} = \sum_{\substack{\bar{d} \in \DegreeSeqsOf{n},\\ \sum_{j = 1}^{n} d_j = i}} \IHomDegreeSeq{B'}{G}{\bar{d}}
    \end{align*}
    for every $i \in \NumbersTo{2^{n-1}}$, this proves our claim and finishes the proof.
\end{proof}

\end{document}